 \newdefinition{remark}{Remark}
 \newtheorem{lemma}{Lemma}
 \newtheorem{theorem}{Theorem}
 \newtheorem{corollary}{Corollary}
 \newdefinition{definition}{Definition}
 \newdefinition{assumption}{Assumption}
\journal{Ad Hoc Networks}
\begin{document}

\begin{frontmatter}

\title{On Throughput Capacity for a Class of Buffer-Limited MANETs}

\author[XD,FUN]{Jia Liu}
\ead{jliu871219@gmail.com}

\author[XD]{Min Sheng}
\ead{mshengxd@gmail.com}

\author[XD1]{Yang Xu}
\ead{yxu@xidian.edu.cn}

\author[XD]{Jiandong Li}
\ead{jdli@ieee.org}

\author[FUN]{Xiaohong Jiang}
\ead{jiang@fun.ac.jp}

\address[XD]{State Key Laboratory of ISN, Institute of Information Science, Xidian University, Xi’an, Shaanxi, 710071, China}
\address[FUN]{School of Systems Information Science, Future University Hakodate, 116-2 Kamedanakano-cho, Hakodate, Hokkaido, 041-8655, Japan}
\address[XD1]{School of Computer Science and Technology, Xidian University, Xi’an, Shaanxi, 710071, China}

\begin{abstract}
Available throughput performance studies for mobile ad hoc networks (MANETs) suffer from two major limitations: they mainly focus on the scaling law study of throughput, while the exact throughput of such networks remains largely unknown; they usually consider the infinite buffer scenarios, which are not applicable to the practical networks with limited buffer. As a step to address these limitations, this paper develops a general framework for the exact throughput capacity study of a class of buffer-limited MANETs with the two-hop relay. We first provide analysis to reveal how the throughput capacity of such a MANET is determined by its relay-buffer blocking probability (RBP). Based on the Embedded Markov Chain Theory and Queuing Theory, a novel theoretical framework is then developed to enable the RBP and closed-form expression for exact throughput capacity to be derived. We further conduct case studies under two typical transmission scheduling schemes to illustrate the applicability of our framework and to explore the corresponding capacity optimization as well as capacity scaling law. Finally, extensive simulation and numerical results are provided to validate the efficiency of our framework and to show the impacts brought by the buffer constraint.

\end{abstract}

\begin{keyword}
Mobile ad hoc networks; throughput capacity; finite buffers; two-hop relay; queuing analysis
\end{keyword}

\end{frontmatter}

\section{Introduction} \label{section:introduction}

The mobile ad hoc network (MANET) represents a kind of self-organizing network architecture, which consists of mobile nodes communicating with each other without centralized infrastructure and management \cite{Perkins_BOOK01}. Since MANET can be deployed and reconfigured rapidly at very low cost, it serves as an appealing candidate for many critical applications, such as disaster relief, battlefield communication and emergency rescue \cite{Ramanathan_CM02}. To support the design and applications of MANETs, the studies on their fundamental performance have been extensively reported. Despite much research activity, however, the lack of a general capacity theory for MANETs is still a long standing open problem and becomes an obstacle on the development and commercialization of such networks \cite{Andrews_CM08,Goldsmith_CM11}.

Since the pioneer work of Gupta and Kumar \cite{Gupta_IT00}, extensive scaling law (i.e. order sense) results on the capacity of ad hoc networks have been reported in literature, which mainly focus on the study of asymptotic per node throughput behavior as the number of network nodes increases. The results in \cite{Gupta_IT00} indicate that for a static ad hoc network, its per node throughput diminishes to zero as the number of network nodes tends to infinity. Later, Grossglauser and Tse demonstrated in \cite{Grossglauser_Tse_2001} that with the help of node mobility, a $\Theta(1)$\footnote{Please kindly refer to \cite{Cormen_BOOK01} for the notations of order results.} constant per node throughput is achievable in a MANET. Inspired by the seminar work of \cite{Grossglauser_Tse_2001}, lots of studies have been devoted to the analysis on the scaling laws of MANETs throughput under various mobility models \cite{Gamal_IT06,Lin_TON06,Mammen_IT07,Sharma_TON07,Ciullo_TON11}, and under various network scenarios \cite{Peng_INFOCOM11,Huang_TON12,Chen_INFOCOM12,Zhang_TON14}. For a survey on the scaling law results of MANETs throughput, readers are referred to \cite{Lu_2013} and references therein. Although scaling laws are helpful to grasp the general trend of MANET performance, they provide a little insight into its real achievable throughput. In practice, however, a thorough understanding on the real achievable throughput of MANETs is of great concern for network engineers, since it serves as an instruction guideline for network design and optimization.

By now, some initial and helpful results are available on the exact throughput capacity study of MANETs, i.e. to derive the exact expressions for the throughput capacity of such networks. For the regular Manhattan and ring networks, Mergen and Tong \cite{Mergen_IT2005} derived their throughput capacity in closed-form. Neely \emph{et al}. \cite{Neely_IT05} explored the exact capacity of cell-partitioned MANETs, and revealed a fundamental tradeoff between the throughput capacity and packet delay in such networks. Following this line, Wang \emph{et al}. \cite{Wang_TON11} extended the tradeoff results in \cite{Neely_IT05} to a multicast scenario. Gao \emph{et al}. \cite{Gao_2013} considered a more general network scenario with a general setting of transmission range and derived its capacity under the group-based scheduling. Liu \emph{et al}. further explored the exact throughput capacity of MANETs with packet redundancy \cite{Liu_TWC11} and power control \cite{Liu_INFOCOM12}. Recently, Chen \emph{et al}. studied the exact throughput capacity of MANETs with directional antennas \cite{Chen_2013} and explored the efficient approximations for the exact throughput capacity of MANETs with ALOHA protocol \cite{Chen_2013_Aloha}. 

It is notable, however, that one common limitation of all these studies is that to make their analysis tractable, they all assume the relay buffer of a node, which is used for temporarily storing packets of other nodes, has an infinite buffer size. This assumption does not hold for a practical MANET, where the buffer size of a mobile node is usually limited due to both its storage space limitation and computing capability limitation. Thus, for the practical capacity study of MANETs, the constraint on buffer size should be carefully addressed. Notice that the throughput capacity modeling with practical limited-buffer constraint still remains a technical challenge. This is mainly due to the lack of a general theoretical framework to efficiently characterize the highly complicated buffer occupancy behaviors in such networks.

As a step to address above limitations, this paper studies the exact throughput capacity for a class of MANETs, where each node is associated with a shared and limited relay buffer to temporarily store the packets of other nodes \cite{Herdtner_INFOCOM05,Le_TON12} and the flexible two-hop relay routing scheme is adopted for packet forwarding. The two-hop reply serves as an important routing protocol for practical MANETs \cite{Wei_2004}, since it is simple and can be implemented easily in a distributed fashion. Also, the two-hop reply is efficient in the sense that it has the capability of achieving the throughput capacity under many important MANET scenarios \cite{Neely_IT05, Gao_2013, Chen_2013_Aloha}. For this class of buffer-limited MANETs, as a thorough extension of our previous work \cite{Liu_14PIMRC}, this paper develops a general theoretical framework to enable the analytical study on their exact throughput capacity to be conducted.

The main contributions are summarized as follows:
\begin{itemize}

\item
For one buffer-limited MANET concerned in this paper, we first provide theoretical analysis to reveal the inherent relationship between its throughput capacity and its relay buffer blocking probability (RBP).

\item 
For the analysis of RBP, a novel theoretical framework based on the Embedded Markov Chain Theory is then
developed to capture the complicated queuing process in a relay buffer. With the help of the framework and also the Queueing Theory, the RBP under any exogenous input rate and the closed-form expression for exact throughput capacity are then derived.

\item
Case studies are further provided under two typical transmission scheduling schemes to illustrate how our theoretical framework can be applied for exact throughput capacity analysis under one given transmission scheduling scheme. The corresponding capacity optimization issue and scaling law performance are also explored.

\item
Finally, extensive simulation and numerical results are provided to demonstrate the efficiency of our theoretical framework on capturing the throughput behavior of a buffer-limited MANET and to illustrate the impacts brought by the buffer constraint.
\end{itemize}

The remainder of this paper is outlined as follows. Section~\ref{section:preliminaries} introduces the system models, routing scheme and definitions involved in this study. We present the overall framework for throughput capacity analysis in Section~\ref{section:throughput_capacity}, and then develop an Embedded Markov Chain-based framework in Section~\ref{section:Markov} for the evaluation of RBP and exact throughput capacity. Section~\ref{section:case_study} deals with the case studies, capacity optimization issue and scaling law results, and Section~\ref{section:simulation} provides the simulation and numerical results. Finally, we provide the related work in Section~\ref{section:related_work} and conclude this paper in Section~\ref{section:conclusion}.

\section{Preliminaries} \label{section:preliminaries}
This section presents the system models, buffer constraint, routing scheme and basic definitions involved in this study. Table~\ref{table:notations} summarizes the main notations.

\begin{table*}[t]
\caption{Main notations}
\renewcommand\arraystretch{1.1}
\begin{tabu} to \hsize {X X[8,l]}
\hline 
Symbol             & Quantity \tabularnewline \Xhline{1.2pt}
$n$                & Number of nodes                \tabularnewline
$m$                & The network is partitioned into $m\times m$ cells     \tabularnewline
$B$                & Relay buffer size              \tabularnewline
$\lambda$          & Exogenous packet arrival rate         \tabularnewline
$p_b(\lambda)$     & Relay-buffer blocking probability \tabularnewline
$T_c$              & Throughput capacity            \tabularnewline
$\alpha$           & The probability that a node selects to execute S-R transmission when it gets access to the wireless channel and cannot execute the S-D
                     transmission \tabularnewline 
\hline

\end{tabu}
\label{table:notations}
\end{table*}

\subsection{System Models}  \label{subsection:system_model}

As illustrated in Fig.~\ref{fig:system_models} that we consider a time-slotted and cell-partitioned network with $n$ mobile nodes and $m\times m$ non-overlapping equal cells, where the nodes roam from cell to cell over the network according to the independent and identically distributed (i.i.d) mobility model \cite{Neely_IT05,Sharma_TON07}, and during a time slot the total amount of data that can be transmitted from a node to another is fixed and normalized to one packet. With the i.i.d mobility model, each node independently selects a cell among all cells with equal probability at the beginning of each time slot and then stays in it during this time slot, so the location of each node is i.i.d and uniformly distributed over all cells in each time slot, and between time slots the distributions of nodes' locations are independent. 

We adopt the i.i.d. mobility model here mainly due to the following reasons. First, the mathematical tractability of this model allows us to gain important insights into the structure of throughput capacity analysis. Second, as illustrated in \cite{Neely_IT05} and to be demonstrated in Section~\ref{subsection:simulation_results}, the throughput capacity result derived under this model can be applied to other typical mobility models like the random walk model. Third, the analysis under this model provides a meaningful theoretical performance result in the limit of infinite mobility \cite{Neely_IT05}. Notice that the mobility model determines the distributions of nodes' locations and thus the opportunity that nodes encounter with each other, so it will affect the throughput capacity that can be achieved, as to be shown in Section~\ref{section:Markov} and Section~\ref{section:case_study}.

We consider the permutation traffic model widely used in previous literature \cite{Ciullo_TON11}, where $n$ distinct unicast traffic flows exist in the network, each node is the source of one flow and meanwhile the destination of another flow. We assume that the exogenous (self-generated) packet arrival at each node follows the i.i.d Bernoulli process with mean rate $\lambda$ packets/slot.

\begin{figure}[!t]
\centering\includegraphics[width=3.3in]{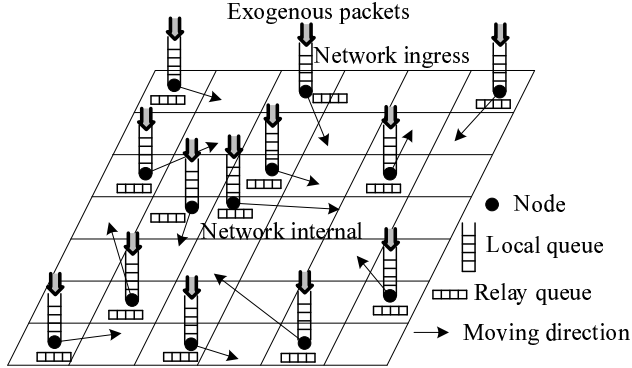}
\caption{System models.}
\label{fig:system_models}
\end{figure}

\subsection{Node Model and Buffer Constraint}  \label{subsection:node_model}

As illustrated in Fig.~\ref{fig:system_models} that we consider a practical node model similar to that of previous studies on buffer-limited wireless networks \cite{Herdtner_INFOCOM05,Le_TON12}, where each node maintains two independent queues, one local queue with unlimited buffer size for storing the exogenous packets of its own flow and one shared relay queue with fixed size $B$ for storing the relay packets coming from all other $n-2$ traffic flows. The local queue follows the FIFO (first-in-first-out) discipline. The relay queue follows the quasi-FIFO (Q-FIFO) discipline, i.e., the packets destined to the same node in the relay queue follow the FIFO discipline.

We adopt this node model here mainly due to the following reasons. First, in a practical network, each node usually reserves a much larger buffer space for storing its exogenous packets rather than the relay packets. Second, even though the local buffer space is not enough when bursty traffic comes, the upper layer (like transport layer) can execute congestion control to avoid the loss of local packets. Third, as illustrated in Fig.~\ref{fig:system_models} that the local queue serves as an ingress for exogenous packets entering a MANET, so it is actually an external module of the MANET and its buffer size does not affect the maximal achievable throughput the MANET can support. Finally, the relay queue serves as an important internal module of a MANET for storing and forwarding relay packets, so its buffer size has a critical impact on the network throughput performance.

\subsection{Routing Scheme}  \label{subsection:routing_scheme}

\begin{figure}[!t]
\centering\includegraphics[width=3.3in]{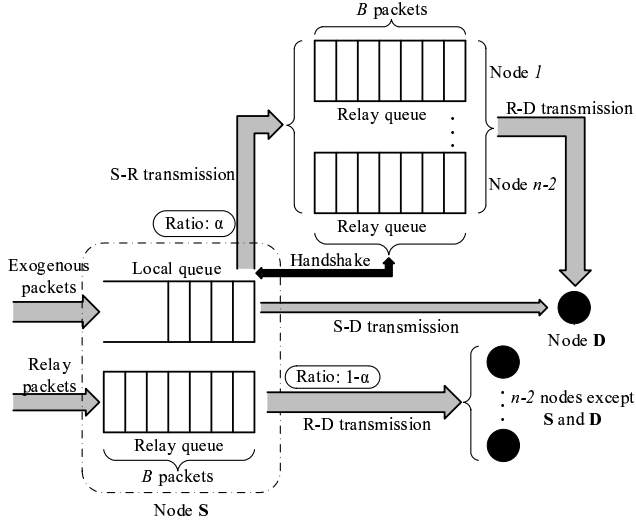}
\caption{Illustration of the 2HR-$\alpha$ routing scheme.}
\label{fig:routing_scheme}
\end{figure}

To support the efficient operation of buffer-limited MANETs, we extend the Two Hop Relay (2HR) algorithm \cite{Grossglauser_Tse_2001} and consider a more flexible Two Hop Relay scheme with parameter $\alpha$ (2HR-$\alpha$ for short), which incorporates both a control parameter $\alpha$ for transmission control and a handshaking mechanism for avoiding unnecessary packet loss. 

Regarding the 2HR scheme, for a tagged flow $(\mathbf{S},\mathbf{D})$ with source node $\mathbf{S}$ and destination node $\mathbf{D}$, when $\mathbf{S}$ gets access to the wireless channel in a time slot, it will transmit a packet directly to $\mathbf{D}$ (S-D transmission) if $\mathbf{D}$ is within its transmission range; otherwise with probability $0.5$, $\mathbf{S}$ selects to transmit a self-generated packet to a relay node (S-R transmission), or deliver a packet of other nodes to the corresponding destination (R-D transmission).

Notice that under the buffer-limited scenario, when node $\mathbf{S}$ executes the S-R transmission while the relay queue of the receiver is full, then the transmission will not be successful and the transmitted packet will be lost. To facilitate the operation of buffer-limited MANETs and improve the throughput performance, we adopt here the 2HR-$\alpha$ scheme, which is an extension of the 2HR scheme in the following two aspects. First, as illustrated in Fig.~\ref{fig:routing_scheme} that we introduce a parameter $\alpha$ to flexibly control the probability that $\mathbf{S}$ selects to conduct S-R transmission, i.e., when $\mathbf{S}$ gets access to the wireless channel and node $\mathbf{D}$ is not within its transmission range, $\mathbf{S}$ selects to transmit a self-generated packet to a relay with probability $\alpha$, and deliver a packet of other nodes to the corresponding destination with probability $1-\alpha$.  Thus, $\alpha$ represents the level of selfishness of a node, from $0$ (fully selfless) to $1$ (fully selfish), and it is expected that $\alpha$ should be set appropriately according to the network settings to achieve the optimal throughput performance. Second, to avoid the unnecessary packet loss in S-R transmission, the 2HR-$\alpha$ scheme further adopts a handshake mechanism to confirm the relay-buffer occupancy state of a receiver, where the S-R transmission will be conducted only when the relay queue of the intended receiver is not full.

\subsection{Basic Definitions} \label{subsection:definitions}

\textbf{Relay-buffer Blocking Probability (RBP)}: For a buffer-limited MANET with the 2HR-$\alpha$ scheme and a given exogenous packet arrival rate $\lambda$ to each node, the \emph{relay-buffer blocking probability} $p_b(\lambda)$ of a node is defined as the probability that the relay queue of this node is full. 

\textbf{Throughput}: The \emph{throughput} of a flow is defined as the time average of number of packets that can be delivered from its source to its destination. 

\textbf{Throughput Capacity and Optimal Throughput Capacity}: For a buffer-limited MANET with the 2HR-$\alpha$ scheme, its \emph{throughput capacity} $T_c$ is defined as the maximal achievable per-flow throughput the network can support, independent of the arrival rate. The \emph{optimal throughput capacity} $T_c^*$ is defined as the maximum value of throughput capacity $T_c$ optimized over the control parameter $\alpha$, i.e., $T_c^*=\mathop {\max }\limits_{\alpha \in [0,1]}T_c$.

\section{Throughput Capacity Analysis} \label{section:throughput_capacity}
For a buffer-limited MANET with the 2HR-$\alpha$ scheme, we denote by $p_{sd}$, $p_{sr}$ and $p_{rd}$ the probabilities that in a time slot a node gets access to the wireless channel and selects to execute S-D, S-R and R-D transmission respectively. With the help of these basic probabilities, we can establish the following theorem regarding the throughput capacity of the network. 

\begin{theorem} \label{theorem:throughput_capacity}
For a buffer-limited MANET with the 2HR-$\alpha$ scheme, its throughput capacity $T_c$ is determined as 
\begin{equation}
T_c=p_{sd}+p_{sr}(1-p_b(\tilde{\lambda})) \hspace{1cm} packets/slot, \label{eq:general_throughput_capacity}
\end{equation}
where $\tilde{\lambda}$ is the unique solution of  the following equation 
\begin{equation}
\lambda=p_{sd}+p_{sr}(1-p_b(\lambda)). \label{eq:lambda_equal_mu}
\end{equation}
\end{theorem}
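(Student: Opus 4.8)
The plan is to reduce the whole-network problem to a single tagged flow $(\mathbf{S},\mathbf{D})$ and, via a flow-conservation argument, to identify the per-flow throughput with the departure rate of the source's local queue. First I would observe that every self-generated packet that leaves $\mathbf{S}$ is eventually delivered to $\mathbf{D}$: a packet cleared by an S-D transmission reaches $\mathbf{D}$ directly, while a packet cleared by an S-R transmission enters some relay buffer and is later forwarded by an R-D transmission. Because each relay buffer has finite size $B$, its occupancy process is positive recurrent and admits a stationary distribution, so in steady state the per-flow rate of packets entering the relay buffers equals the per-flow rate of packets leaving them via R-D transmissions. Summing the direct and two-hop contributions then yields that the per-flow throughput $T(\lambda)$ equals the long-run rate at which packets depart the local queue of $\mathbf{S}$.

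The next step is to compute the service rate of this local queue. When the queue is non-empty and $\mathbf{S}$ obtains the channel, its head-of-line packet departs in exactly two situations: an S-D opportunity occurs (probability $p_{sd}$), in which case delivery always succeeds since $\mathbf{D}$ is in range; or $\mathbf{S}$ selects an S-R transmission (probability $p_{sr}$), which under the handshake of the 2HR-$\alpha$ scheme succeeds only when the chosen relay's buffer is not full, an event of probability $1-p_b(\lambda)$ by the symmetry of the permutation traffic and the i.i.d. mobility. A slot in which $\mathbf{S}$ instead performs an R-D transmission serves some other flow and leaves the local head-of-line packet in place. Hence the local queue behaves as a discrete-time queue with Bernoulli$(\lambda)$ arrivals and, whenever non-empty, a Bernoulli service completion of rate $\mu(\lambda)=p_{sd}+p_{sr}\bigl(1-p_b(\lambda)\bigr)$.

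I would then analyze stability as a function of $\lambda$. The local queue is stable precisely when $\lambda<\mu(\lambda)$, in which case the departure rate, and thus $T(\lambda)$, equals $\lambda$; once $\lambda\ge\mu(\lambda)$ the queue saturates and the sustainable departure rate is pinned at $\mu(\lambda)$. Writing the balance condition $\lambda=\mu(\lambda)$ reproduces exactly Eq.~\eqref{eq:lambda_equal_mu}. Because a larger exogenous rate drives a heavier arrival stream into the relay buffers, the blocking probability $p_b(\lambda)$ is non-decreasing in $\lambda$, so the right-hand side $\mu(\lambda)$ is non-increasing while the left-hand side $\lambda$ is strictly increasing; comparing the two at $\lambda=0$, where $0<\mu(0)$, shows the curves cross exactly once, giving existence and uniqueness of $\tilde\lambda$. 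Taking the supremum of $T(\lambda)$ over all $\lambda$, which grows like $\lambda$ up to $\tilde\lambda$ and cannot be sustained beyond it, then delivers $T_c=\tilde\lambda=p_{sd}+p_{sr}\bigl(1-p_b(\tilde\lambda)\bigr)$, i.e.\ Eq.~\eqref{eq:general_throughput_capacity}.

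The step I expect to be the main obstacle is making the saturation regime $\lambda>\tilde\lambda$ rigorous: one must argue that the maximum sustainable departure rate is exactly $\mu(\tilde\lambda)$ rather than some larger value. This hinges on the self-consistency between the source's injection rate and the relay blocking probability, since for a backlogged $\mathbf{S}$ the effective rate at which its S-R packets are admitted into relay buffers is itself throttled by the factor $1-p_b$, so the attainable throughput is forced to the fixed point of $\mu(\cdot)$ and cannot overshoot $\tilde\lambda$. Closely tied to this is the monotonicity of $p_b(\lambda)$ underlying uniqueness in Eq.~\eqref{eq:lambda_equal_mu}; I would expect this to rest on the explicit queueing characterization of $p_b(\lambda)$ developed later in Section~\ref{section:Markov} rather than on the conservation argument alone.
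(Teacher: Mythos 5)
Your proposal is correct and follows essentially the same route as the paper: both model the local queue as a Bernoulli/Bernoulli queue with effective service rate $\mu_s(\lambda)=p_{sd}+p_{sr}(1-p_b(\lambda))$, use the monotonicity of $p_b(\lambda)$ to establish existence and uniqueness of the fixed point $\tilde\lambda$, and split into the stable regime ($Th=\lambda$) and the saturated regime ($Th=\tilde\lambda$). The only cosmetic difference is that the paper formalizes throughput via the counting identity $Th=\lim_{t\to\infty}\frac{n\lambda t-m_0(t)-m_1(t)}{nt}$ (using boundedness of the relay buffers to drop $m_1$) rather than your flow-conservation phrasing, and the self-consistency issue you flag for the saturated regime is handled there exactly as you anticipate, by pinning the departure rate at $\mu_s(\tilde\lambda)$.
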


\begin{proof}
To prove the Theorem, we first demonstrate that there exists an unique solution $\tilde{\lambda}$ for the equation~(\ref{eq:lambda_equal_mu}), and then show that the throughput is $\lambda$ when packet arrival rate $\lambda<\tilde{\lambda}$,  but the throughput is always $\tilde{\lambda}$ when $\lambda \geq \tilde{\lambda}$.

From our system models, it is clear that each flow experiences the same service process without priority, so the queuing process of each flow is identical and we can focus on a tagged flow in our analysis. Under the 2HR-$\alpha$ routing scheme, the delivering process of a packet from its source to destination involves at most two stages. The first stage is the queuing process at its source node, while the second stage is the queuing process at one relay node if the packet is not directly delivered to its destination. 

Concerning the first stage queuing process, the local queue there can be modeled as a
Bernoulli/Bernoulli queue with arrival rate $\lambda$ and service rate $\mu_s(\lambda)$ determined as
\begin{equation}
\mu_s(\lambda)=p_{sd}+p_{sr}\left(1-p_b(\lambda)\right).  \label{eq:mu_s}
\end{equation}

We can easily see that: 1) when $\lambda=0$, we have $p_b(0)=0$, so $\mu_s(0)=p_{sd}+p_{sr}>\lambda$; 2) as $\lambda$ increases, $p_b(\lambda)$ tends to increase, leading to a decrease in  $\mu_s(\lambda)$; 3) when $\lambda=p_{sd}+p_{sr}$, we have $p_b(\lambda)>0$, $\mu_s(\lambda)<\lambda$. Based on these properties of service rate $\mu_s(\lambda)$, we know that there exists an unique $0<\tilde{\lambda}<p_{sd}+p_{sr}$ such that $\tilde{\lambda}=\mu_s(\tilde{\lambda})$.

Considering a time interval $[0,t]$, we denote by $m_0(t)$ and $m_1(t)$ the number of packets being buffered in all local queues and all relay queues at time slot $t$, respectively.  Since the total number of exogenous arrival packets during this interval is $n \lambda t$, then the throughput $Th$ is determined as
\begin{equation} 
Th=\lim\limits_{t\to\infty}\frac{n \lambda t-m_0(t)-m_1(t)}{n\cdot t}. \label{eq:throughput}
\end{equation}

Since the relay-buffer of each node has a fixed size $B$, then $\frac{m_1(t)}{n}\leq B$ and $\lim\limits_{t\to\infty}\frac{m_1(t)}{n\cdot t}=0$. 

For the case $\lambda<\tilde{\lambda}$, we denote by $L_s$ the queue length of the local queue, then the expectation $\mathbb{E}\{L_s\}$ of $L_s$ is given by \cite{Daduna_BOOK01}
\begin{equation}
\mathbb{E}\{L_s\}=\frac{\lambda-\lambda^{2}}{\mu_s(\lambda)-\lambda}. \label{eq:local_queue_length}
\end{equation}
Since when $\lambda<\tilde{\lambda}$, we have $\mu_s(\lambda)>\lambda$, so the queue length $\mathbb{E}\{L_s\}$ is bounded in this case. Thus, $\lim\limits_{t\to\infty}\frac{m_0(t)}{n\cdot t}=0$ and $Th=\lambda$.

When $\lambda \geq \tilde{\lambda}$, then $\mu_s(\lambda)<\lambda$, leading to an increasing number of packets buffered in the local queues. By applying the law of large numbers, we have that as $t \to \infty$
\begin{equation}
\frac{m_0(t)}{t} \overset{a.s.}{\to} n (\lambda - \mu_s(\tilde{\lambda})). \nonumber
\end{equation}
Based on (\ref{eq:throughput}), we then have $Th=\tilde{\lambda}$ when $\lambda \geq \tilde{\lambda}$.

Thus, the throughput capacity $T_c$ of the concerned network is determined as
\begin{equation}
T_c=\mu_s(\tilde{\lambda})=p_{sd}+p_{sr}\left(1-p_b(\tilde{\lambda})\right).  \nonumber
\end{equation}
\end{proof}

\begin{remark}
Notice that for the heterogeneous network scenario, the network level capacity region is defined by a vector $\mathbf{\Lambda}=\{\lambda_1,\lambda_2,\cdots,\lambda_n\}$, where $\lambda_i$ denotes the maximum packet arrival rate of node $i$ that the network can stably support. For the homogeneous network scenario considered in this paper, the behavior of each node is the same, so we have $\lambda_1=\lambda_2=\cdots=\lambda_n$. Thus, the network level capacity here reduces to the per node capacity as expressed in (\ref{eq:general_throughput_capacity}).  
\end{remark}

\section{Embedded Markov Chain Framework} \label{section:Markov}
The result in Theorem~\ref{theorem:throughput_capacity} indicates that for the throughput capacity analysis of the concerned MANET, we need to determine RBP $p_b(\lambda)$ in such network. To address this issue, in this section we first utilize a two-dimensional Markov Chain (MC) to depict the complicated state transitions of a relay queue, then convert the two-dimensional MC into a new Embedded Markov Chain (EMC) to obtain its one-step transition probability, such that a complete theoretical framework is developed to enable the RBP and thus the exact throughput capacity of the concerned MANET to be derived.  

\subsection{State Machine of Relay Queue}
\begin{figure}[!t]
\centering
{
\subfloat[Transitions of a general state $(i,k)$.]{\includegraphics[width=2.5in]{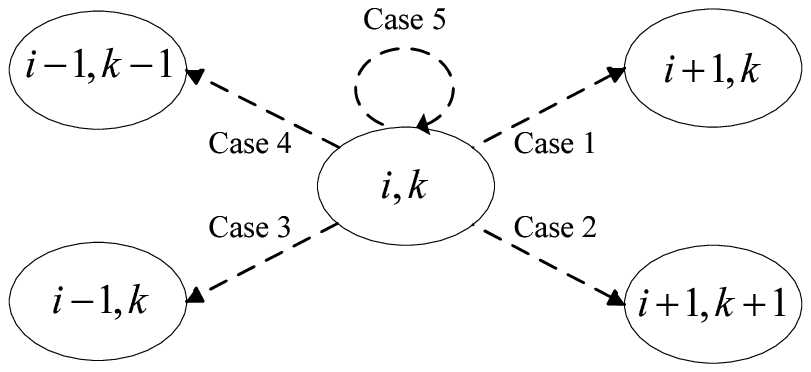}\label{fig:transition_cases}}
\hspace{0.15in}
\subfloat[State machine of a relay queue.]{\includegraphics[width=3.0in]{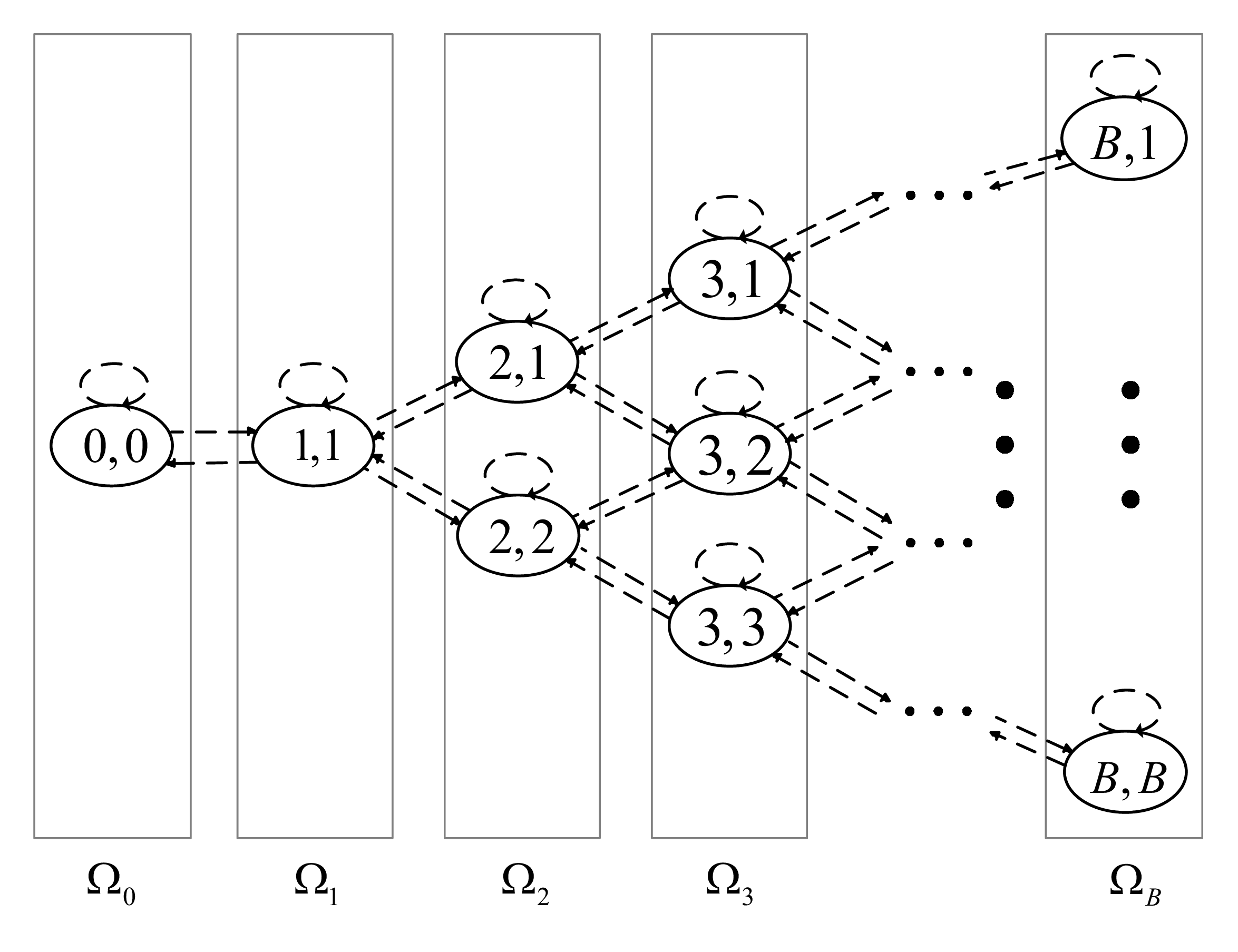}\label{fig:state_machine}}
}
\caption{Transitions and state machine of a relay queue.}
\label{fig:transition_state}
\end{figure}

Without loss of generality, we focus on the relay queue of a general node and use a two-tuple $(i,k)$ to
denote the state that the relay queue contains $i$ packets and these packets are destined for $k$ distinct destination nodes, here $0 \leq i \leq B$, $k=0$ when $i = 0$ and $1 \leq k \leq i$ when $i>0$.
As illustrated in Fig.~\ref{fig:transition_cases} that when the relay queue is in state $(i,k)$ at current time slot, only one of the following transitions may happen in the next time slot:
 
\begin{itemize}

\item{Case 1:}
transition from $(i,k)$ to $(i+1,k)$, i.e., a packet enters the relay queue and this packet is destined for a destination same as one of packet(s) already in the queue. 

\item{Case 2:}
transition from $(i,k)$ to $(i+1,k+1)$, i.e., a packet enters the relay queue and the destination of this packet is different from all packet(s) in the queue. 

\item{Case 3:}
transition from $(i,k)$ to $(i-1,k)$, i.e., a packet from the relay queue is delivered to its destination, but there still exist other packet(s) in the relay queue destined for this destination. 

\item{Case 4:}
transition from $(i,k)$ to $(i-1,k-1)$, i.e., a packet from the relay queue is delivered to its destination, and none of the remaining packet(s) in the relay queue is destined for this destination. 

\item{Case 5:}
transition from $(i,k)$ to $(i,k)$, i.e., no packet entering into or departing from the relay queue. 

\end{itemize}

Based on above transitions, the state machine of the relay queue can then be modeled as a discrete-time finite-state Markov chain illustrated in Fig.~\ref{fig:state_machine}, where  $\Omega_i$ denotes the set of states $(i,k), k\leq i$.

\subsection{Markov Chain Collapsing} \label{chain_collapsing}
It is easy to see that the total number of states of the Markov chain in Fig.~\ref{fig:state_machine} is $1+\frac{(1+B)B}{2}$, so the Markov chain will become too complicated to be solved when $B$ is big. Notice also that for the analysis of RBP $p_b(\lambda)$, the limiting distribution on each set $\Omega_i$ rather than the limiting distribution on each state $(i,k)$ is of concern. For these reasons, we apply the novel ``Markov Chain Collapsing'' \cite{Hachigian_1963} technique to convert the Markov chain in Fig.~\ref{fig:state_machine} into a new EMC illustrated in Fig.~\ref{fig:embedded_markov}, where $\Omega_i$ denotes the state that the relay queue contains $i$ packets and $p_{\Omega_i,\Omega_j}$ denotes the one-step transition probability from $\Omega_i$ to $\Omega_j$. 

With the help of the EMC model in Fig.~\ref{fig:embedded_markov} and also the Markov Chain model in Fig.~\ref{fig:state_machine}, we can establish the following lemmas regarding the evaluation of $p_{\Omega_i,\Omega_j}$ and also the limiting distribution of the EMC. 

\begin{figure}[t]
\centering\includegraphics[width=3.3in]{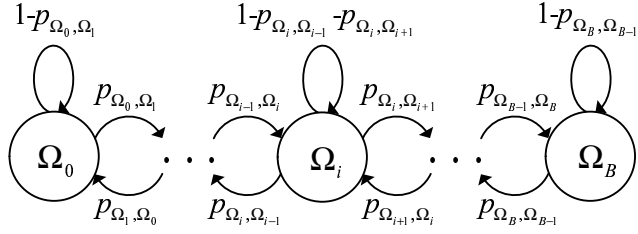}
\caption{State machine of the EMC.}
\label{fig:embedded_markov}
\end{figure}

\begin{lemma} \label{lemma:transition}
The one-step transition probability $p_{\Omega_i,\Omega_j}$ of the EMC in Fig.~\ref{fig:embedded_markov} is determined as
\begin{equation}
p_{\Omega_i,\Omega_j}=\left\{ 
\begin{aligned}
&\rho_s(\lambda) \cdot p_{sr}, &   &j=i+1 \leq B \\
&\frac{i}{n-3+i} \cdot p_{rd}, &   &j=i-1 \geq 0 \\
&1-p_{\Omega_i,\Omega_{i+1}}-p_{\Omega_i,\Omega_{i-1}}, & & j=i \\
&0,                            &   &\text{others} 
\end{aligned}
\right.
\label{eq:transition}
\end{equation}
where $\rho_s(\lambda)=\frac{\lambda}{p_{sd}+p_{sr}(1-p_b(\lambda))}=\frac{\lambda}{\mu_s(\lambda)}$.
\end{lemma}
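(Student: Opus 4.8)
The plan is to exploit the fact that, within a single slot, the relay queue of the tagged node can gain or lose \emph{at most one} packet, so that the collapsed chain on $\{0,1,\dots,B\}$ is a birth--death chain and only its up- and down-probabilities must be found. First I would establish this $\{-1,0,+1\}$ structure: in the cell-partitioned model the tagged node participates in at most one transmission per slot, either as the receiver of an S-R packet (one arrival to its relay queue) or as the sender of an R-D packet (one departure), but never both. Moreover its own S-D/S-R activity touches only the local queue, and packets destined \emph{to} it are consumed rather than relayed, so the only events that move the relay occupancy are S-R receptions and R-D deliveries. Hence every nonzero off-diagonal entry of the EMC is a one-step move, the ``others'' case of (\ref{eq:transition}) is immediate, and the diagonal $p_{\Omega_i,\Omega_i}$ is forced to equal $1-p_{\Omega_i,\Omega_{i+1}}-p_{\Omega_i,\Omega_{i-1}}$ by row-normalization. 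It then remains to evaluate the aggregate arrival and departure probabilities and to check that lumping the two-dimensional chain of Fig.~\ref{fig:state_machine} onto the $\Omega_i$ of Fig.~\ref{fig:embedded_markov} is consistent.

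For the up-transition ($j=i+1\le B$) I would compute the probability that the relay queue receives a packet while not full, via a conservation/symmetry argument. Each completed S-R transmission is simultaneously a ``send'' for some source and a ``receive'' for some relay, so summing over nodes the per-node send and receive probabilities coincide by homogeneity. As a source, the tagged node sends a relay packet only when its local queue is busy (probability $\rho_s(\lambda)$, the utilization of the Bernoulli/Bernoulli queue of Theorem~\ref{theorem:throughput_capacity}), when it selects S-R (probability $p_{sr}$), and when the chosen relay is not full (probability $1-p_b(\lambda)$), giving an unconditional per-node receive probability $\rho_s(\lambda)\,p_{sr}\,(1-p_b(\lambda))$. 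Since the handshake forbids arrivals at a full queue, a reception implies ``not full,'' so conditioning on a state $\Omega_i$ with $i<B$ divides out the factor $1-p_b(\lambda)$ and yields exactly $p_{\Omega_i,\Omega_{i+1}}=\rho_s(\lambda)\,p_{sr}$. Crucially this value is independent of the distinct-destination count $k$, so Cases~1 and~2 both feed $\Omega_{i+1}$ and the arrival part collapses without loss.

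For the down-transition ($j=i-1\ge 0$) I would first write the per-state R-D success probability from the delivery model: given the node performs R-D (probability $p_{rd}$), a packet departs iff the intended in-range receiver is one of the destinations for which the node currently holds a packet, an event that at the two-dimensional level depends on $k$ and splits into Cases~3 and~4 according to whether a further packet for that destination remains. I would then invoke the Markov Chain Collapsing of \cite{Hachigian_1963}, averaging the per-state probability against the stationary conditional law $\Pr(k\mid i)$ so that all of $\Omega_i$ is lumped into one state with a single down-rate. The target is to show this average reduces to the $k$-free expression $\frac{i}{n-3+i}\,p_{rd}$; concretely this amounts to an occupancy identity of the form $\mathbb{E}\{k\mid i\}=(n-2)\,\frac{i}{n-3+i}$ for the conditional mean number of distinct destinations held among the $n-2$ admissible relay destinations.

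The main obstacle is precisely this last step: showing that the destination-dependent per-state departure probabilities average, under the stationary conditional distribution of $k$ given $i$, to a clean function of $i$ alone. This is where Markov Chain Collapsing does the real work of legitimizing the lumping of the $\tfrac{(1+B)B}{2}$ states into the $B+1$ states $\Omega_i$, and it is the piece whose verification I expect to require the explicit conditional occupancy distribution rather than the one-line symmetry argument that suffices for arrivals. Once both one-step probabilities are secured, the remaining entries of (\ref{eq:transition}) follow at once, since jumps of size $\ge 2$ are impossible and the diagonal is fixed by normalization.
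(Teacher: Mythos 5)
Your plan follows essentially the same route as the paper's proof: the up-probability is obtained by the same utilization/conservation argument (a per-node S-R reception probability $\rho_s(\lambda)\,p_{sr}\,(1-p_b(\lambda))$ with the factor $1-p_b(\lambda)$ divided out by conditioning on a non-full queue, which is $k$-independent and so survives the collapsing), and the down-probability by set-averaging $k\,p_{rd}/(n-2)$ against the conditional law of $k$ given $i$. The one step you defer is exactly what the paper's appendix supplies via an occupancy (stars-and-bars) count, namely $P\left((i,k)\,\big|\,\Omega_i\right)=\binom{n-2}{k}\binom{i-1}{k-1}\big/\binom{n-3+i}{i}$, from which a Vandermonde convolution gives $\mathbb{E}\{k\mid i\}=(n-2)\,i/(n-3+i)$ and confirms the identity you anticipated.
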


\begin{proof}
See \ref{appendix:transition} for the proof.
\end{proof}

\begin{lemma} \label{lemma:limit_distribution}
The limiting distribution $\mathbf{\Pi}=\left\{\pi(\Omega_0),\pi(\Omega_1),\cdots,\pi(\Omega_B)\right\}$ of the EMC exists and is unique, and it is given by
\begin{align}
\pi(\Omega_0)&=\frac{1}{\sum_{i=0}^B{\mathrm{C}_i \cdot \beta^i \cdot \rho_s(\lambda)^i}}, \label{eq:pi_0} \\
\pi(\Omega_i)&=\frac{\mathrm{C}_i \cdot \beta^i \cdot \rho_s(\lambda)^i} {\sum_{i=0}^B{\mathrm{C}_i \cdot \beta^i \cdot \rho_s(\lambda)^i}},  \quad 0<i\leq B  \label{eq:pi_i}
\end{align}
where $\mathrm{C}_i=\binom{n-3+i}{i}$ and $\beta=\frac{p_{sr}}{p_{rd}}=\frac{\alpha}{1-\alpha}$. 
\end{lemma}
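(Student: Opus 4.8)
The plan is to recognize the EMC in Fig.~\ref{fig:embedded_markov} as a finite birth--death chain and to exploit its reversibility. First I would read off from Lemma~\ref{lemma:transition} that transitions occur only between neighbouring levels $\Omega_i$ and $\Omega_{i\pm1}$ (together with self-loops), and that the up-rate $\rho_s(\lambda)\,p_{sr}$ and the down-rate $\frac{i}{n-3+i}\,p_{rd}$ are strictly positive for $0<i<B$. Hence the chain is finite and irreducible, and since at least the boundary state $\Omega_0$ carries a nonzero self-loop probability $1-\rho_s(\lambda)\,p_{sr}$, it is also aperiodic. A finite, irreducible, aperiodic Markov chain admits a unique stationary distribution that coincides with its limiting distribution, which settles the existence and uniqueness part of the claim at once; it then remains only to exhibit this distribution in closed form.

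For the explicit form I would invoke detailed balance, which holds automatically for any birth--death chain. Writing the local balance equation across the cut separating $\{\Omega_0,\dots,\Omega_i\}$ from $\{\Omega_{i+1},\dots,\Omega_B\}$, namely $\pi(\Omega_i)\,p_{\Omega_i,\Omega_{i+1}}=\pi(\Omega_{i+1})\,p_{\Omega_{i+1},\Omega_i}$, and substituting the rates from~(\ref{eq:transition}) together with $p_{sr}/p_{rd}=\beta$, I obtain the recursion
\begin{equation}
\pi(\Omega_{i+1})=\pi(\Omega_i)\cdot\frac{n-2+i}{i+1}\cdot\beta\cdot\rho_s(\lambda). \nonumber
\end{equation}
Iterating this from $\Omega_0$ gives $\pi(\Omega_i)=\pi(\Omega_0)\,\bigl(\beta\,\rho_s(\lambda)\bigr)^i\prod_{j=0}^{i-1}\frac{n-2+j}{j+1}$.

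The single computational step that needs care is the telescoping product: the numerator $\prod_{j=0}^{i-1}(n-2+j)=(n-2)(n-1)\cdots(n-3+i)=(n-3+i)!/(n-3)!$ and the denominator $\prod_{j=0}^{i-1}(j+1)=i!$ combine to give exactly $\binom{n-3+i}{i}=\mathrm{C}_i$, so that $\pi(\Omega_i)=\pi(\Omega_0)\,\mathrm{C}_i\,\beta^i\,\rho_s(\lambda)^i$, a formula that also holds at $i=0$ since $\mathrm{C}_0=1$. Finally I would impose the normalization $\sum_{i=0}^B\pi(\Omega_i)=1$ and solve for $\pi(\Omega_0)$, recovering~(\ref{eq:pi_0}) and~(\ref{eq:pi_i}). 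Because detailed balance only ever couples adjacent levels, the recursion and the normalization carry over to the boundary states $\Omega_0$ and $\Omega_B$ without modification; I expect the matching of the telescoping product to the binomial coefficient $\mathrm{C}_i$ to be the one place where an index slip could creep in, so that is the step I would verify most carefully.
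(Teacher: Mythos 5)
Your proposal is correct and follows essentially the same route as the paper: establish ergodicity of the finite birth--death EMC (irreducible, aperiodic) to get existence and uniqueness, then solve the balance equations --- your cut/detailed-balance relations are exactly the chain of equations the paper extracts from $\mathbf{\Pi}\mathbf{P}=\mathbf{\Pi}$ --- iterate the recursion to obtain $\pi(\Omega_i)=\mathrm{C}_i\beta^i\rho_s(\lambda)^i\pi(\Omega_0)$, and normalize. The telescoping product you flag does indeed collapse to $\binom{n-3+i}{i}$ as claimed.
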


\begin{proof}
See \ref{appendix:limit_distribution} for the proof.
\end{proof}

\subsection{Derivation of $p_b(\lambda)$ and $T_c$}
Based on above EMC-based framework, we now provide analysis on the RBP $p_b(\lambda)$ and the exact throughput capacity $T_c$, as summarized in following theorem.

\begin{theorem} \label{theorem:T_c}
(\textbf{Main result}) For a concerned MANET with $n$ mobile nodes, where each node is allocated with a relay-buffer of fixed size $B$ and the 2HR-$\alpha$ scheme is adopted for packet delivery, its RBP is determined by the following equation
\begin{equation}
p_b(\lambda)=\frac{\mathrm{C}_B \cdot \beta^B \cdot \rho_s(\lambda)^B} {\sum_{i=0}^B{\mathrm{C}_i \cdot \beta^i \cdot \rho_s(\lambda)^i}}, \label{eq:blocking_pro}
\end{equation}
and its throughput capacity $T_c$ is determined as
\begin{equation}
T_c=p_{sd}+p_{sr}\left( 1-\frac{\mathrm{C}_B \cdot \beta^B}{\sum_{i=0}^B {\mathrm{C}_i \cdot \beta^i}}\right).
\label{eq:throughput_capacity}
\end{equation}

\end{theorem}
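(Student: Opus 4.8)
The plan is to prove the two claims in turn: first to establish the implicit RBP relation~(\ref{eq:blocking_pro}) as a direct reading of Lemma~\ref{lemma:limit_distribution}, and then to obtain the closed form~(\ref{eq:throughput_capacity}) by evaluating $p_b(\cdot)$ at the critical rate $\tilde{\lambda}$ already identified in Theorem~\ref{theorem:throughput_capacity}.

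For the RBP, I would note that by the Markov Chain Collapsing construction the limiting probability $\pi(\Omega_B)$ of the EMC is exactly the long-run fraction of time slots in which the relay queue holds $B$ packets, i.e.\ is full. By the definition of RBP this fraction is precisely $p_b(\lambda)$, so putting $i=B$ in~(\ref{eq:pi_i}) yields~(\ref{eq:blocking_pro}). I would stress that this is an \emph{implicit} characterisation: since $\rho_s(\lambda)=\lambda/\mu_s(\lambda)$ itself depends on $p_b(\lambda)$ through $\mu_s(\lambda)=p_{sd}+p_{sr}(1-p_b(\lambda))$, equation~(\ref{eq:blocking_pro}) is really a fixed-point equation for $p_b(\lambda)$ rather than an explicit expression.

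The throughput part then follows from one clean observation. Theorem~\ref{theorem:throughput_capacity} already gives $T_c=p_{sd}+p_{sr}(1-p_b(\tilde{\lambda}))$, where $\tilde{\lambda}$ is the unique root of~(\ref{eq:lambda_equal_mu}), equivalently $\tilde{\lambda}=\mu_s(\tilde{\lambda})$. Feeding this self-consistency into the traffic intensity gives
\[
\rho_s(\tilde{\lambda})=\frac{\tilde{\lambda}}{\mu_s(\tilde{\lambda})}=1 .
\]
I would then substitute $\rho_s(\tilde{\lambda})=1$ into~(\ref{eq:blocking_pro}): every factor $\rho_s(\tilde{\lambda})^i$ collapses to $1$, so $p_b(\tilde{\lambda})=\mathrm{C}_B\beta^B/\sum_{i=0}^B \mathrm{C}_i\beta^i$, an expression free of $\lambda$ and depending only on $n$, $B$ and $\alpha$ (through $\beta$). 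Inserting this into the Theorem~\ref{theorem:throughput_capacity} expression for $T_c$ delivers~(\ref{eq:throughput_capacity}) at once.

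The computation is light; the crux is recognising that the fixed-point condition defining $\tilde{\lambda}$ forces the relay-queue traffic intensity to equal exactly one at capacity, which is precisely what decouples the otherwise implicit RBP equation and reduces it to a closed form. The only point needing care is to confirm that $\tilde{\lambda}$ lies in the regime where Lemma~\ref{lemma:limit_distribution} applies, so that evaluating $p_b(\tilde{\lambda})$ via~(\ref{eq:blocking_pro}) is legitimate; this is guaranteed by $0<\tilde{\lambda}<p_{sd}+p_{sr}$ established in the proof of Theorem~\ref{theorem:throughput_capacity}.
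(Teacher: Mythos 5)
Your proposal is correct and follows essentially the same route as the paper's own proof: identify $p_b(\lambda)$ with the limiting probability $\pi(\Omega_B)$ of the EMC, observe that the fixed-point condition $\tilde{\lambda}=\mu_s(\tilde{\lambda})$ from Theorem~\ref{theorem:throughput_capacity} forces $\rho_s(\tilde{\lambda})=1$, and substitute to collapse the implicit RBP equation into the closed form. Your explicit remarks that~(\ref{eq:blocking_pro}) is a fixed-point equation and that $\tilde{\lambda}$ lies in the valid regime are welcome clarifications but do not change the argument.
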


\begin{proof}
Notice that when the EMC is on state $\Omega_B$, the relay queue is full. It indicates that RBP $p_b(\lambda)$ is just equal to the limiting distribution $\pi(\Omega_B)$, so (\ref{eq:blocking_pro}) follows. Since $\rho_s(\lambda)=\frac{\lambda}{p_{sd}+p_{sr}(1-p_b(\lambda))}$, equation~(\ref{eq:blocking_pro}) contains only one unknown quantity $p_b(\lambda)$. By solving equation~(\ref{eq:blocking_pro}), we can then obtain the RBP $p_b(\lambda)$ under any exogenous packet arrival rate $\lambda$.

From Theorem~\ref{theorem:throughput_capacity} we know that as $\lambda$ approaches $\tilde{\lambda}$, $\rho_s(\lambda)$ tends to $1$. Substituting $\rho_s(\tilde{\lambda})=1$ into (\ref{eq:blocking_pro}), we have
\begin{equation}
p_b(\tilde{\lambda})=\frac{\mathrm{C}_B \cdot \beta^B} {\sum_{i=0}^B{\mathrm{C}_i \cdot \beta^i}}. \label{eq:pb_tc}
\end{equation} 
The formula (\ref{eq:throughput_capacity}) then follows by substituting (\ref{eq:pb_tc}) into (\ref{eq:general_throughput_capacity}).

\end{proof}

\begin{remark} \label{remark:applicable}
It is notable that our EMC-based framework for throughput capacity analysis is very general in the sense that it can be applied to other network models (like the continuous network model \cite{Chen_2013_Aloha}) and other mobility models (like the random walk mobility model), as long as they lead to the same steady distributions of the nodes' locations as the i.i.d mobility model.
\end{remark}

Based on Theorem~\ref{theorem:T_c}, we have the following corollaries (see \ref{appendix:corollaries} for the proofs).

\begin{corollary} \label{corollary:B_grow}
For a network with $n \geq 3$, its throughput capacity $T_c$ increases as relay-buffer size $B$ grows.
\end{corollary}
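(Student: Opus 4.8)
The plan is to isolate the only $B$-dependent quantity in the capacity formula and reduce the claim to a monotonicity statement about a single ratio. The access/selection probabilities $p_{sd},p_{sr},p_{rd}$ (and hence $\beta=p_{sr}/p_{rd}$) are fixed by the scheduling scheme and by $\alpha$, and do not depend on the buffer size $B$; so equation~(\ref{eq:throughput_capacity}) shows that $T_c$ is increasing in $B$ if and only if
$$g(B):=\frac{\mathrm{C}_B\,\beta^B}{\sum_{i=0}^B \mathrm{C}_i\,\beta^i}$$
is strictly decreasing in $B$. Writing $a_i=\mathrm{C}_i\,\beta^i$ and $S_B=\sum_{i=0}^B a_i$, so that $g(B)=a_B/S_B$, I would first compute
$$g(B)-g(B+1)=\frac{S_B\,(a_B-a_{B+1})+a_B a_{B+1}}{S_B\,S_{B+1}},$$
which reduces the corollary to showing that the numerator is strictly positive for every $B\geq 0$.

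The structural fact driving everything is that the ratio
$$r_i:=\frac{a_i}{a_{i-1}}=\frac{\mathrm{C}_i}{\mathrm{C}_{i-1}}\,\beta=\frac{n-3+i}{i}\,\beta=\Bigl(1+\frac{n-3}{i}\Bigr)\beta$$
is nonincreasing in $i$ precisely when $n\geq 3$; this is exactly where the hypothesis enters, and it says that $(a_i)$ is a log-concave sequence. When $a_{B+1}\leq a_B$ the numerator above is obviously positive, so the substance of the argument lies in the opposite regime (large $\beta$, i.e.\ $\alpha$ near $1$), where the full-buffer weight $a_B$ is still growing in absolute terms. Here I would use log-concavity in the form $a_{i+1}a_B\geq a_i a_{B+1}$ for $0\leq i\leq B-1$ (which follows from $r_{i+1}\geq r_{B+1}$), sum this over $i$, and use $a_0=\mathrm{C}_0=1$ to obtain $S_B(a_{B+1}-a_B)\leq a_B a_{B+1}-a_B$. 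Substituting back shows the numerator is at least $a_B>0$, completing the proof.

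I expect the main obstacle to be precisely the case $a_{B+1}>a_B$: there one cannot argue termwise and must control the whole partial sum $S_B$ against the product $a_B a_{B+1}$, so the crux is establishing the monotonicity of $r_i$ (equivalently the log-concavity of $\mathrm{C}_i\beta^i$) and then packaging it as the summed inequality above. An equivalent and perhaps more transparent route would be to show instead that $1/g(B)=\sum_{k=0}^{B}\prod_{j=B-k+1}^{B} r_j^{-1}$ is strictly increasing in $B$: shifting the index of summation up by one replaces each product by one with uniformly larger factors $r_j^{-1}$ (again by the monotonicity of $r_j$), while an extra positive term appears, giving the strict increase directly. Either way the lone analytic ingredient is the monotonicity of $r_i$, after which the conclusion is purely arithmetic.
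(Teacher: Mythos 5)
Your proposal is correct and is essentially the paper's own argument: the paper also reduces the claim to showing that $s_B=\mathrm{C}_B\beta^B/\sum_{i=0}^B\mathrm{C}_i\beta^i$ is strictly decreasing, and its key termwise inequality $(k+1)\mathrm{C}_{i+1}\ge (n-2+k)\mathrm{C}_i$ (where $n\ge 3$ enters via $(n-3)(k-i)\ge 0$) is exactly your log-concavity statement $r_{i+1}\ge r_{k+1}$. The only difference is cosmetic packaging --- the paper bounds the ratio $s_{k+1}/s_k<1$ term by term, while you bound the difference $g(B)-g(B+1)>0$ by summing the same inequality.
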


\begin{corollary} \label{corollary:tc_alpha_0.5}
With the setting of $\alpha=0.5$, i.e., each node executes S-R and R-D transmission with equal probability, $T_c$ is determined as
\begin{equation}
T_c=p_{sd}+p_{sr}\frac{B}{n-2+B}
\label{eq:tc_alpha_0.5}
\end{equation}
\end{corollary}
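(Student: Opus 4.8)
The plan is to specialize the general throughput-capacity formula (\ref{eq:throughput_capacity}) to the case $\alpha = 0.5$ and then reduce the resulting ratio of binomial sums to closed form. First I would note that setting $\alpha = 0.5$ forces $\beta = \frac{\alpha}{1-\alpha} = 1$, so every factor $\beta^i$ in (\ref{eq:throughput_capacity}) becomes $1$ and the capacity collapses to $T_c = p_{sd} + p_{sr}\bigl(1 - \frac{\mathrm{C}_B}{\sum_{i=0}^B \mathrm{C}_i}\bigr)$, where $\mathrm{C}_i = \binom{n-3+i}{i}$. The entire problem thus reduces to evaluating the single fraction $\frac{\mathrm{C}_B}{\sum_{i=0}^B \mathrm{C}_i}$ explicitly.

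The crucial step is to evaluate the denominator $\sum_{i=0}^B \binom{n-3+i}{i}$ in closed form. Rewriting $\binom{n-3+i}{i} = \binom{n-3+i}{n-3}$ and invoking the hockey-stick identity $\sum_{j=r}^{N}\binom{j}{r} = \binom{N+1}{r+1}$ with $r = n-3$ and $N = n-3+B$ (so that the summation index substitution $j = n-3+i$ carries $i$ from $0$ to $B$), I would obtain $\sum_{i=0}^B \binom{n-3+i}{i} = \binom{n-2+B}{n-2} = \binom{n-2+B}{B}$. This combinatorial reduction is where essentially all of the content lies, and verifying the endpoints of the summation after the index shift is the one place where the argument could go astray; I regard it as the main obstacle, though a routine one.

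Granting that evaluation, the rest is a short computation. I would write $1 - \frac{\mathrm{C}_B}{\sum_{i=0}^B \mathrm{C}_i} = \frac{\binom{n-2+B}{B} - \binom{n-3+B}{B}}{\binom{n-2+B}{B}}$ and apply Pascal's rule $\binom{n-2+B}{B} = \binom{n-3+B}{B} + \binom{n-3+B}{B-1}$ to collapse the numerator to $\binom{n-3+B}{B-1}$. Expanding the factorials in the ratio $\frac{\binom{n-3+B}{B-1}}{\binom{n-2+B}{B}}$, the $(n-3+B)!$ and $(n-2)!$ terms cancel, leaving $\frac{B}{n-2+B}$. Substituting this back into the specialized formula yields $T_c = p_{sd} + p_{sr}\frac{B}{n-2+B}$, which is exactly (\ref{eq:tc_alpha_0.5}).
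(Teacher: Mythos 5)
Your proposal is correct and follows essentially the same route as the paper: set $\beta=1$, reduce the problem to evaluating $\sum_{i=0}^B\binom{n-3+i}{i}=\binom{n-2+B}{B}$ (the paper obtains this identity by a direct telescoping of products rather than citing the hockey-stick identity, but it is the same fact), and then simplify the resulting ratio to $\frac{B}{n-2+B}$. The only cosmetic difference is that you make the final ratio computation explicit via Pascal's rule, whereas the paper leaves that last step implicit.
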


\begin{corollary} \label{corollary:tc_infinity}
When the relay-buffer size $B$ tends to infinity, the throughput capacity $T_c$ is determined as

\begin{equation}
T_c|_{B\to \infty}=\left\{ 
\begin{aligned}
&p_{sd}+p_{sr}, &   &\alpha \leq 0.5 \\
&p_{sd}+p_{rd}, &   &\alpha >0.5 
\end{aligned}
\right.
\label{eq:tc_infinity}
\end{equation}
\end{corollary}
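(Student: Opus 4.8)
The plan is to work directly from the closed-form expression for $T_c$ established in Theorem~\ref{theorem:T_c}, since the only $B$-dependent quantity appearing there is the ratio
\[
R_B \;=\; \frac{\mathrm{C}_B\,\beta^B}{\sum_{i=0}^B \mathrm{C}_i\,\beta^i}\;=\;p_b(\tilde{\lambda}),
\qquad \mathrm{C}_i=\binom{n-3+i}{i},\ \ \beta=\frac{p_{sr}}{p_{rd}}.
\]
Thus it suffices to evaluate $\lim_{B\to\infty}R_B$ and substitute into $T_c=p_{sd}+p_{sr}(1-R_B)$. Because $\beta=\alpha/(1-\alpha)$, the three regimes $\alpha<0.5$, $\alpha=0.5$, $\alpha>0.5$ correspond exactly to $\beta<1$, $\beta=1$, $\beta>1$, so I would split the analysis along this threshold.

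For $\beta\le 1$ I would use a generating-function argument. When $\beta<1$ ($\alpha<0.5$), the denominator of $R_B$ is a partial sum of the convergent series $\sum_{i=0}^\infty\binom{n-3+i}{i}\beta^i=(1-\beta)^{-(n-2)}$, so it tends to a finite positive constant; meanwhile the numerator $\binom{n-3+B}{B}\beta^B$ is a polynomial in $B$ (of degree $n-3$) multiplied by $\beta^B$, hence vanishes as $B\to\infty$. Therefore $R_B\to 0$ and $T_c\to p_{sd}+p_{sr}$. When $\beta=1$ ($\alpha=0.5$) the series diverges, but the hockey-stick identity $\sum_{i=0}^B\binom{n-3+i}{i}=\binom{n-2+B}{B}$ collapses the ratio to $R_B=\binom{n-3+B}{B}/\binom{n-2+B}{B}=\frac{n-2}{n-2+B}\to 0$ (equivalently, one may invoke Corollary~\ref{corollary:tc_alpha_0.5} directly). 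Since $\beta=1$ forces $p_{sr}=p_{rd}$, the limit $p_{sd}+p_{sr}$ coincides with $p_{sd}+p_{rd}$, so both stated branches agree at the boundary.

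The main obstacle is the case $\beta>1$ ($\alpha>0.5$), where the series diverges and no closed form for the partial sum is available. Here I would reverse the order of summation: writing $a_i=\mathrm{C}_i\beta^i$, one has $1/R_B=\sum_{j=0}^B a_{B-j}/a_B$, and the consecutive-term ratio is
\[
\frac{a_{k-1}}{a_k}=\frac{k}{(n-3+k)\,\beta}\le\frac{1}{\beta}\quad(n\ge 3),
\qquad \frac{a_{k-1}}{a_k}\xrightarrow[k\to\infty]{}\frac{1}{\beta}.
\]
Consequently each term satisfies $a_{B-j}/a_B=\prod_{k=B-j+1}^{B}\frac{a_{k-1}}{a_k}\le \beta^{-j}$ and converges pointwise to $\beta^{-j}$ as $B\to\infty$. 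The summable bound $\beta^{-j}$ licenses a dominated-convergence argument for the series, yielding $1/R_B\to\sum_{j\ge 0}\beta^{-j}=\beta/(\beta-1)$, i.e. $R_B\to 1-1/\beta=(p_{sr}-p_{rd})/p_{sr}$. Substituting gives $T_c\to p_{sd}+p_{sr}\cdot\frac{p_{rd}}{p_{sr}}=p_{sd}+p_{rd}$, completing the claim.

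The single delicate point is justifying the interchange of limit and infinite summation in the divergent case; the uniform domination $a_{B-j}/a_B\le\beta^{-j}$, valid precisely because $n\ge 3$ makes every factor $\frac{k}{n-3+k}$ at most $1$, is what makes the passage to the limit rigorous, and I would flag it as the crux of the argument. The remaining algebra (the generating-function value, the hockey-stick collapse, and the telescoping product) is routine.
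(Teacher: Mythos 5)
Your proposal is correct, and its overall skeleton --- reduce everything to $\lim_{B\to\infty} R_B$ with $R_B = \mathrm{C}_B\beta^B/\sum_{i=0}^B\mathrm{C}_i\beta^i$ and split on $\beta<1$, $\beta=1$, $\beta>1$ --- is exactly the paper's. For $\beta\le 1$ the two arguments coincide in substance: the paper obtains $\sum_{i\ge 0}\mathrm{C}_i\beta^i=(1-\beta)^{-(n-2)}$ by repeatedly differentiating a geometric sum and kills the numerator via the bound $\mathrm{C}_B\beta^B\le 2^nB^n\beta^B\to 0$, while you quote the negative binomial series and the polynomial-times-geometric decay directly; at $\beta=1$ both reduce to the hockey-stick identity (Corollary~\ref{corollary:tc_alpha_0.5}), and your observation that $p_{sr}=p_{rd}$ there reconciles the two branches. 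The genuine divergence is in the case $\beta>1$. The paper rewrites $1-R_B$ as $\bigl(\frac{1}{\sum_{i=0}^{B-1}\mathrm{C}_i\beta^i}+\beta\,\frac{\sum_{i=0}^{B-1}\mathrm{C}_{i+1}\beta^i}{\sum_{i=0}^{B-1}\mathrm{C}_i\beta^i}\bigr)^{-1}$ and asserts that the second ratio tends to $1$ --- a true statement (it follows from $\mathrm{C}_{i+1}/\mathrm{C}_i\to 1$ by a Stolz--Ces\`aro-type argument) but one left unjustified in the appendix. Your reversed summation $1/R_B=\sum_{j=0}^B a_{B-j}/a_B$, with the uniform domination $a_{B-j}/a_B\le\beta^{-j}$ (valid since $n\ge 3$) and pointwise convergence to $\beta^{-j}$, supplies a fully rigorous dominated-convergence justification of the same limit $1/R_B\to\beta/(\beta-1)$. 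So your route buys rigor precisely where the paper's proof is thinnest, at the cost of invoking an interchange-of-limits lemma; the paper's route is shorter but leaves its key asymptotic ratio as an unproved assertion.
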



\section{Case Studies} \label{section:case_study}
The results in Theorem~\ref{theorem:T_c} indicate for the evaluation of throughput capacity of a MANET concerned in this paper, we need to determine the probabilities $p_{sd}$ and $p_{sr}$ in the network, which are further related to the schemes adopted for transmission scheduling. To demonstrate the applicability of our Embedded Markov Chain-based framework for throughput capacity analysis, this section provides case studies under two typical transmission scheduling schemes, i.e., local transmission scheduling (LTS) and group-based transmission scheduling (GTS). The related issues of capacity optimization and scaling law performance will be also explored.

\subsection{Throughput Capacity under LTS}
We first provide analysis of throughput capacity under the LTS scheme \cite{Neely_IT05}, where two nodes within the same cell can forward one packet during a time slot, and nodes within different cells cannot communicate. Same as \cite{Neely_IT05}, the node/cell density $d$ is assumed to be $O(1)$ (independent of $n$) and without loss of generality, $n$ is assumed to be even and source-destination pairs are composed as: $1 \leftrightarrow 2, 3 \leftrightarrow 4, \cdots, n-1 \leftrightarrow n$.

We denote by $p_0$ and $p_1$ the probabilities that there are at least two nodes in a cell and there is at least one source-destination pair in a cell, respectively. Based on the results of \cite{Neely_IT05}, we then have
\begin{align}
p_0&=1-\left(1-\frac{1}{m^2}\right)^{n}-\frac{n}{m^2}\left(1-\frac{1}{m^2}\right)^{n-1},  \label{eq:p_0}\\
p_1&=1-\left(1-\frac{1}{m^{4}}\right)^{n/2}.   \label{eq:p_1}
\end{align}

At a time slot, the total transmission opportunity in the network is $m^2 \cdot p_0$, which is shared equally by all nodes, so we have

\begin{equation}
n\cdot(p_{sd}+p_{sr}+p_{rd})=m^2\cdot p_0. \label{eq:total_opportunity}
\end{equation}
Similarly,
\begin{equation}
n\cdot p_{sd}=m^2\cdot p_1. \label{eq:direct_opportunity}
\end{equation}
Combining with $\frac{p_{sr}}{p_{rd}}=\frac{\alpha}{1-\alpha}$, we have
\begin{align} 
p_{sd}&=\frac{1}{d}p_1,   \label{eq:p_sd}   \\
p_{sr}&=\frac{\alpha}{d}(p_0-p_1),  \label{eq:p_sr} \\
p_{rd}&=\frac{1-\alpha}{d}(p_0-p_1).  \label{eq:p_rd}
\end{align}

Substituting (\ref{eq:p_sd}) and (\ref{eq:p_sr}) into (\ref{eq:throughput_capacity}), we can see that the throughput capacity under the LTS scheme is determined as 
\begin{equation}
T_c=\frac{1}{d}p_1+\frac{\alpha}{d}(p_0-p_1) \left( 1-\frac{\mathrm{C}_B \cdot \beta^B}{\sum_{i=0}^B {\mathrm{C}_i \cdot \beta^i}}\right). \label{eq:tc_LTS}
\end{equation}

\begin{remark}
From Corollary~\ref{corollary:tc_infinity} we can see that when  $\alpha=0.5$ and $B \to \infty$, then
(\ref{eq:tc_LTS}) is reduced to the capacity result in \cite{Neely_IT05}, i.e., $T_c=\frac{p_0+p_1}{2d}$. 
\end{remark}

Regarding the optimal throughput capacity $T_c^*$ and the corresponding optimal setting of $\alpha^*$  under the LTS scheme, we have the following theorem.

\begin{theorem} \label{theorem:optimization}
For a concerned MANET with the 2HR-$\alpha$ scheme and a fixed relay-buffer size $B$, its optimal throughput capacity $T_c^*$ under the LTS scheme is determined as
\begin{equation}
T_c^*=\frac{1}{d}p_1+\frac{p_0-p_1}{d(1+\gamma^*)}\frac{h(\gamma^*)}{h(\gamma^*)+\mathrm{C}_B}, \label{eq:tc^*}
\end{equation}
and the corresponding optimal transmission ratio $\alpha^*$ is given by $\alpha^*=\frac{1}{1+\gamma^*}$, where 
\begin{equation}
h(\gamma)=\sum_{i=0}^{B-1} {\mathrm{C}_i \cdot \gamma^{B-i}}, \label{eq:h(gamma)}
\end{equation} 
$h'(\gamma)$ is the derivative of $h(\gamma)$, and $r^*$ is determined by the following equation
\begin{equation}
h(\gamma^*)[h(\gamma^*)+\mathrm{C}_B]=(1+\gamma^*)\mathrm{C}_B h'(\gamma^*).
\label{eq:gamma^*}
\end{equation}

\end{theorem}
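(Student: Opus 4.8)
The plan is to eliminate the constraint $\beta=\frac{\alpha}{1-\alpha}$ by a change of variable and reduce the maximization of the throughput capacity (\ref{eq:tc_LTS}) over $\alpha$ to a one-dimensional calculus problem whose first-order condition coincides with (\ref{eq:gamma^*}).

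\textbf{Step 1 (Reparametrization).} I would substitute $\gamma=\frac{1-\alpha}{\alpha}$, so that $\alpha=\frac{1}{1+\gamma}$ and $\beta=\frac{1}{\gamma}$, with $\gamma\in[0,\infty)$ as $\alpha$ ranges over $(0,1]$. Using $1-\frac{\mathrm{C}_B\beta^B}{\sum_{i=0}^B \mathrm{C}_i\beta^i}=\frac{\sum_{i=0}^{B-1}\mathrm{C}_i\beta^i}{\sum_{i=0}^B \mathrm{C}_i\beta^i}$ and multiplying numerator and denominator by $\gamma^B$, the bracketed factor in (\ref{eq:tc_LTS}) becomes $\frac{h(\gamma)}{h(\gamma)+\mathrm{C}_B}$ with $h$ as in (\ref{eq:h(gamma)}). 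Hence (\ref{eq:tc_LTS}) takes the form
\[ T_c=\frac{1}{d}p_1+\frac{p_0-p_1}{d(1+\gamma)}\frac{h(\gamma)}{h(\gamma)+\mathrm{C}_B}, \]
which already matches the target expression (\ref{eq:tc^*}) once $\gamma$ is set to the optimizer and $\alpha^*=\frac{1}{1+\gamma^*}$.

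\textbf{Step 2 (First-order condition).} Since $\frac{1}{d}p_1$ and $\frac{p_0-p_1}{d}$ are constants independent of $\gamma$, maximizing $T_c$ is equivalent to maximizing
\[ g(\gamma)=\frac{h(\gamma)}{(1+\gamma)\big(h(\gamma)+\mathrm{C}_B\big)}. \]
Differentiating by the quotient rule, the numerator of $g'(\gamma)$ simplifies to $(1+\gamma)\mathrm{C}_B h'(\gamma)-h(\gamma)\big(h(\gamma)+\mathrm{C}_B\big)$; setting it to zero reproduces exactly the stationarity equation (\ref{eq:gamma^*}). Substituting the resulting root $\gamma^*$ back into the expression of Step~1 then yields (\ref{eq:tc^*}).

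\textbf{Step 3 (Existence).} Because all coefficients $\mathrm{C}_i=\binom{n-3+i}{i}$ are positive and the lowest-order term of $h$ is $\mathrm{C}_{B-1}\gamma$, one has $h(0)=0$ so $g(0)=0$; and since $h(\gamma)\sim \mathrm{C}_0\gamma^B$ as $\gamma\to\infty$, we get $g(\gamma)\sim 1/\gamma\to 0$. As $g$ is continuous and strictly positive on $(0,\infty)$, it attains an interior maximum, which must be a stationary point solving (\ref{eq:gamma^*}). The two boundary settings of $\alpha$ (the fully selfish $\alpha=1$, i.e.\ $\gamma=0$, and the fully selfless $\alpha=0$, i.e.\ $\gamma\to\infty$) both reduce the relay term to zero, confirming that the interior critical point strictly improves on the baseline $p_1/d$.

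\textbf{Step 4 (main obstacle: uniqueness and global optimality).} The delicate part is to show that (\ref{eq:gamma^*}) has a single relevant root and that it is the global maximizer, so that $\gamma^*$ and $\alpha^*$ are well defined. My plan is to establish that $g$ is log-concave in $\gamma$, equivalently that
\[ \frac{g'(\gamma)}{g(\gamma)}=\frac{h'(\gamma)}{h(\gamma)}-\frac{1}{1+\gamma}-\frac{h'(\gamma)}{h(\gamma)+\mathrm{C}_B}=\frac{\mathrm{C}_B h'(\gamma)}{h(\gamma)\big(h(\gamma)+\mathrm{C}_B\big)}-\frac{1}{1+\gamma} \]
is strictly decreasing in $\gamma$, so that it vanishes at exactly one point. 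This reduces to a monotonicity property of the ratio built from the polynomial $h$ with positive coefficients, which I expect to verify using the sign structure of $\mathrm{C}_i$; this log-concavity/monotonicity check is the step I anticipate to be the most technically involved.
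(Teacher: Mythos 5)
Your proposal follows essentially the same route as the paper: the paper makes the identical substitution $\gamma=\frac{1-\alpha}{\alpha}$, rewrites the relay term using $g(\gamma)=(1+\gamma)\left(1+\frac{\mathrm{C}_B}{h(\gamma)}\right)$ (the reciprocal of your objective), argues existence of an interior optimizer from the fact that $g\to\infty$ at both ends of $(0,\infty)$ via the Extreme Value Theorem, and reads off (\ref{eq:gamma^*}) as the stationarity condition, exactly as in your Steps 1--3. The uniqueness and global-optimality issue you isolate in Step 4 is in fact left unaddressed by the paper, which only shows that \emph{some} interior global optimizer satisfies (\ref{eq:gamma^*}); so your first three steps already reproduce everything the paper actually proves, and carrying out the log-concavity argument would strengthen, rather than merely match, the published proof.
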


\begin{proof}
We define $\gamma=\frac{1-\alpha}{\alpha}$ (i.e., $\alpha=\frac{1}{1+\gamma}$, $\beta=\frac{1}{\gamma}$), and $g(\gamma)=(1+\gamma)\left(1+\frac{\mathrm{C}_B}{h(\gamma)} \right)$. From (\ref{eq:tc_LTS}) we can see that the optimal throughput capacity $T_c^*$ is determined as
\begin{align}
T_c^*&=\mathop {\max }\limits_{\alpha \in [0,1]}T_c \nonumber \\
&=\frac{1}{d}p_1+\frac{1}{d}(p_0-p_1) \cdot \mathop {\max }\limits_{\alpha \in [0,1]} \left \{ \alpha \left( 1-\frac{\mathrm{C}_B \cdot \beta^B}{\sum_{i=0}^B {\mathrm{C}_i \cdot \beta^i}}\right) \right \} \nonumber \\
&= \frac{1}{d}p_1+\frac{p_0-p_1}{d} \frac{1}{\mathop {\min }\limits_{\gamma \geq 0}g(\gamma)}. \label{eq:optimization}
\end{align}

We can see that: 1) $g(\gamma)$ is an Elementary Function \cite{Muller_BOOK1997}, so it is continuous and differentiable on the interval $\gamma \in (0,\infty)$; 2) $\mathop {\lim }\limits_{\gamma \to 0}g(\gamma)\to \infty$, $\mathop {\lim }\limits_{\gamma \to \infty}g(\gamma) \to \infty$ and $g(\gamma)>0$. According to the Extreme Value Theorem \cite{Keisler_BOOK2012}, there must exists $0<\gamma^*<\infty$ such that $0<g(\gamma^*) \leq g(\gamma)$ for $\forall \gamma \in (0,\infty)$ and $g'(\gamma^*)=0$, so equation~(\ref{eq:gamma^*}) follows. Then formula (\ref{eq:tc^*}) follows by substituting $\gamma^*$ into (\ref{eq:optimization}).

\end{proof}

Based on Theorem~\ref{theorem:optimization} we have the following corollary.
\begin{corollary} \label{corollary:optimal_alpha}
For any setting of $n$ and $B$, $\alpha^*<0.5$; when $B \to \infty$, $\alpha^*|_{B \to \infty} = 0.5$ and $T_c^*|_{B \to \infty} = \frac{p_0+p_1}{2d}$.
\end{corollary}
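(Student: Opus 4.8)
The plan is to work throughout in the variable $\gamma=\frac{1-\alpha}{\alpha}$ introduced in Theorem~\ref{theorem:optimization}, so that $\alpha=\frac{1}{1+\gamma}$, $\alpha^{*}=\frac{1}{1+\gamma^{*}}$, and $\gamma^{*}$ is the minimiser of $g(\gamma)=(1+\gamma)\bigl(1+\frac{\mathrm{C}_B}{h(\gamma)}\bigr)$ over $\gamma\ge0$. Since $\gamma\mapsto\frac{1}{1+\gamma}$ is strictly decreasing and equals $0.5$ at $\gamma=1$, the claim $\alpha^{*}<0.5$ is exactly $\gamma^{*}>1$; and because $T_c^{*}=\frac1d p_1+\frac{p_0-p_1}{d}\cdot\frac{1}{\min_{\gamma}g}$ by~(\ref{eq:optimization}), the two limiting claims reduce to $\gamma^{*}\to1$ and $\min_{\gamma}g\to2$ as $B\to\infty$. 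I would treat the inequality first and the limits second.

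For $\alpha^{*}<0.5$ I would localise $\gamma^{*}$ by two complementary facts. First, writing $H(\gamma)=h(\gamma)+\mathrm{C}_B=\sum_{i=0}^{B}\mathrm{C}_i\gamma^{B-i}$, the logarithmic derivative takes the compact form
\[
\frac{g'(\gamma)}{g(\gamma)}=\frac{1}{1+\gamma}-\frac{\mathrm{C}_B\,h'(\gamma)}{h(\gamma)\,H(\gamma)} .
\]
Evaluating at $\gamma=1$ via the hockey-stick identity ($h(1)=\binom{n+B-3}{B-1}$, $\mathrm{C}_B=\binom{n+B-3}{B}$, $h'(1)=\binom{n+B-2}{B-1}$) collapses this to
\[
g'(1)=-\frac{(n-3)(n+B-2)}{B(n-1)}<0\qquad(n\ge4),
\]
so $g$ is strictly decreasing at $\gamma=1$ and the minimiser cannot sit at $\gamma=1$. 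Second, I would rule out a minimiser in $(0,1)$ by a reflection argument: the aim is $g(1/\gamma)>g(\gamma)$ for every $\gamma>1$, which pairs each candidate $\beta\in(0,1)$ with the strictly better point $1/\beta>1$. Substituting $\gamma\mapsto1/\gamma$, clearing $(1+\gamma)$ and using $H=h+\mathrm{C}_B$, this reflection inequality is equivalent to the polynomial inequality
\[
\widetilde H(\gamma)\,h(\gamma)>\gamma\,H(\gamma)\,\widetilde h(\gamma),\qquad\gamma>1,
\]
where $\widetilde H(\gamma)=\sum_{i=0}^{B}\mathrm{C}_i\gamma^{i}$ and $\widetilde h(\gamma)=\sum_{i=0}^{B-1}\mathrm{C}_i\gamma^{i}$ are the coefficient-reversed companions of $H$ and $h$, and where both sides coincide at $\gamma=1$. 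Together with the existence of a global minimiser furnished by Theorem~\ref{theorem:optimization}, these two facts force $\gamma^{*}>1$, i.e. $\alpha^{*}<0.5$.

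The hard part will be the reflection (polynomial) inequality. I expect it to follow from the log-concavity of the coefficient sequence $\mathrm{C}_i=\binom{n-3+i}{i}$, whose ratios $\mathrm{C}_{i+1}/\mathrm{C}_i=\frac{n-2+i}{i+1}$ are strictly decreasing in $i$; this log-concavity is what should make the term-by-term comparison of $\widetilde H\,h$ against $\gamma\,H\,\widetilde h$ resolve with a single sign change at $\gamma=1$. (Equivalently, one may aim to prove $g'(\gamma)<0$ on all of $(0,1]$ directly; this carries the same analytic difficulty.) I also note the boundary case $n=3$, where $\mathrm{C}_i\equiv1$ gives $g'(1)=0$ and $\gamma^{*}=1$, so the \emph{strict} inequality genuinely needs $n\ge4$.

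For the $B\to\infty$ limits I would avoid re-optimising and instead pass to the limit in $\alpha$ first. By Corollary~\ref{corollary:tc_infinity}, for each fixed $\alpha$ the capacity converges to the continuous, piecewise-linear ``tent'' $T_c^{\infty}(\alpha)=\frac1d p_1+\frac{p_0-p_1}{d}\min\{\alpha,1-\alpha\}$, which is strictly increasing on $[0,0.5]$, strictly decreasing on $[0.5,1]$ and peaks at $\alpha=0.5$ with value $\frac{p_0+p_1}{2d}$ (assuming the non-degenerate case $p_0>p_1$). Since $T_c(\alpha,B)$ is continuous in $\alpha$ and nondecreasing in $B$ for each fixed $\alpha$ by Corollary~\ref{corollary:B_grow}, this monotone pointwise convergence to a continuous limit on the compact interval $[0,1]$ upgrades to \emph{uniform} convergence by Dini's theorem. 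Uniform convergence yields $T_c^{*}(B)=\max_{\alpha}T_c(\alpha,B)\to\max_{\alpha}T_c^{\infty}(\alpha)=\frac{p_0+p_1}{2d}$, and since the limit has the \emph{unique} maximiser $\alpha=0.5$, every convergent subsequence of $\alpha^{*}(B)$ must tend to $0.5$; compactness of $[0,1]$ then gives $\alpha^{*}(B)\to0.5$. This establishes both $\alpha^{*}|_{B\to\infty}=0.5$ and $T_c^{*}|_{B\to\infty}=\frac{p_0+p_1}{2d}$.
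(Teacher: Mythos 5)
Your argument for the first claim has a genuine gap: the entire burden of proving $\gamma^*>1$ is shifted onto the ``reflection'' inequality $\widetilde H(\gamma)\,h(\gamma)>\gamma\,H(\gamma)\,\widetilde h(\gamma)$ for $\gamma>1$, and you never prove it --- you only state that you ``expect it to follow from the log-concavity'' of the sequence $\mathrm{C}_i$. That expectation is plausible (it checks out for $B=1,2$), but as written the proposal establishes only that $\gamma=1$ is not the minimiser (your computation $g'(1)=-\frac{(n-3)(n+B-2)}{B(n-1)}$ is correct and is in fact sharper than what the paper obtains), which does not exclude a global minimiser in $(0,1)$. The paper closes exactly this hole by a different and fully executed argument: it shows the bracketed quantity $(a)=h(\gamma)[h(\gamma)+\mathrm{C}_B]-(1+\gamma)\mathrm{C}_B h'(\gamma)$ in $g'(\gamma)$ is nonpositive on $(0,1]$ by induction on $B$, using the recursion $h_{k+1}(\gamma)=\gamma\,(h_k(\gamma)+\mathrm{C}_k)$ and the ratio $\mathrm{C}_{k+1}/\mathrm{C}_k=\frac{n-2+k}{k+1}$, so that $g$ is monotonically decreasing on $(0,1]$ and the minimiser must lie in $(1,\infty)$. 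Until you supply a proof of the coefficient comparison between $\widetilde H\,h$ and $\gamma H\widetilde h$ (or, equivalently, carry out the monotonicity argument you mention parenthetically), the conclusion $\alpha^*<0.5$ does not follow. Your observation that the strict inequality genuinely requires $n\geq 4$ (for $n=3$ one has $\mathrm{C}_i\equiv 1$ and $\gamma^*=1$) is correct and matches the paper's implicit restriction to $n>3$.

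The second half of your proposal is sound and is actually a more careful route than the paper's, which simply asserts the limits from Corollary~\ref{corollary:tc_infinity}, (\ref{eq:p_sr}) and (\ref{eq:p_rd}). Your use of monotonicity in $B$ (Corollary~\ref{corollary:B_grow}) plus Dini's theorem to upgrade the pointwise convergence of $T_c(\alpha,B)$ to the tent function $\frac{1}{d}p_1+\frac{p_0-p_1}{d}\min\{\alpha,1-\alpha\}$ into uniform convergence on $[0,1]$, and then the standard argument that maximisers converge to the unique maximiser of the limit, correctly delivers both $\alpha^*|_{B\to\infty}=0.5$ and $T_c^*|_{B\to\infty}=\frac{p_0+p_1}{2d}$ (modulo the harmless non-degeneracy assumption $p_0>p_1$).
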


\begin{proof}
See \ref{appendix:optimal_alpha} for the proof.
\end{proof}

With the help of exact expression of throughput capacity (\ref{eq:tc_LTS}), the achievable scaling law under the LTS scheme can be further explored, as shown in the following corollary.

\begin{corollary} \label{corollary:scaling_law}
When the number of nodes in the concerned MANET tends to infinity, each node can achieve a throughput $T_c(n)$ as
\begin{equation}
T_c(n)=\Theta\left( \frac{B}{n}\right).
\label{eq:scaling_law}
\end{equation}
\end{corollary}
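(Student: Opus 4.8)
The plan is to substitute the exact capacity formula~(\ref{eq:tc_LTS}) into an asymptotic estimate as $n\to\infty$, holding the cell density $d=n/m^2$ at a constant $\Theta(1)$ value (so $1/m^2=d/n$) and treating $\alpha\in(0,1)$, hence $\beta=\alpha/(1-\alpha)$, as fixed. The capacity naturally splits into the direct-delivery term $\frac1d p_1$ and the relayed term $\frac{\alpha}{d}(p_0-p_1)(1-\frac{\mathrm{C}_B\beta^B}{\sum_{i=0}^B\mathrm{C}_i\beta^i})$, so I would determine the order of the location factors $p_0,p_1$ and of the buffer-blocking factor separately, then combine them.

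\textbf{Order of the location factors.} Because $1/m^2=d/n$ and $1/m^4=d^2/n^2$, the expressions~(\ref{eq:p_0})--(\ref{eq:p_1}) are classical binomial-to-Poisson limits. Using $(1-d/n)^n\to e^{-d}$ I would obtain $p_0\to 1-(1+d)e^{-d}$, a strictly positive constant, so $p_0=\Theta(1)$; and using $1-(1-x)^{n/2}\sim\frac{n}{2}x$ with $x=d^2/n^2$ I would obtain $p_1\sim\frac{d^2}{2n}=\Theta(1/n)$. Consequently $p_0-p_1=\Theta(1)$ and the prefactor $\frac{\alpha}{d}(p_0-p_1)=\Theta(1)$. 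These steps are routine.

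\textbf{Order of the buffer-blocking factor.} Rewriting the factor as $1-\frac{\mathrm{C}_B\beta^B}{\sum_{i=0}^B\mathrm{C}_i\beta^i}=\frac{\sum_{i=0}^{B-1}\mathrm{C}_i\beta^i}{\sum_{i=0}^{B}\mathrm{C}_i\beta^i}$, the key observation is the consecutive ratio $\frac{\mathrm{C}_i}{\mathrm{C}_{i+1}}=\frac{i+1}{n-2+i}$, so each summand exceeds its predecessor by the factor $\frac{(n-2+i)\beta}{i+1}\to\infty$. Hence both truncated sums are dominated by their top terms, and the factor behaves like $\frac{\mathrm{C}_{B-1}\beta^{B-1}}{\mathrm{C}_B\beta^B}=\frac{B}{(n-3+B)\beta}=\Theta(B/n)$. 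I would make this rigorous by sandwiching $\sum_{i=0}^{B-1}\mathrm{C}_i\beta^i/(\mathrm{C}_B\beta^B)$ between geometric series of ratio at most $q=\frac{B}{(n-2)\beta}$, which is valid (with $q<1$ and bounded hidden constants) provided $B=O(n)$; alternatively, for the special setting $\alpha=0.5$, Corollary~\ref{corollary:tc_alpha_0.5} already gives this factor \emph{exactly} as $\frac{B}{n-2+B}=\Theta(B/n)$ through the hockey-stick identity, which by monotonicity bounds the optimized capacity from below.

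\textbf{Combining and the obstacle.} Multiplying the $\Theta(1)$ prefactor by the $\Theta(B/n)$ blocking factor gives a relayed term of order $\Theta(B/n)$, while the direct term $\frac1d p_1=\Theta(1/n)$ is dominated by it as soon as $B\ge 1$; adding the two yields $T_c(n)=\Theta(B/n)$. The main obstacle is the buffer-blocking factor: unlike the location factors it is not a single closed form but a ratio of truncated sums of binomial coefficients, so care is needed to justify that only the top terms contribute and that the constants hidden in $\Theta(\cdot)$ remain bounded uniformly in $n$. This is what forces the mild regime $B=O(n)$ (for $B$ fixed the statement reads $\Theta(1/n)$) together with a fixed $\alpha\in(0,1)$, under which the geometric-tail estimates converge cleanly.
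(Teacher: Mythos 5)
Your proposal follows essentially the same route as the paper's own proof: take $n\to\infty$ with $d$ fixed, show $p_0\to 1-(1+d)e^{-d}=\Theta(1)$ and $p_1\to 0$, reduce the blocking factor to $\frac{\sum_{i=0}^{B-1}\mathrm{C}_i\beta^i}{\sum_{i=0}^{B}\mathrm{C}_i\beta^i}\to\frac{B}{(n-3+B)\beta}$ via top-term dominance, and substitute into (\ref{eq:tc_LTS}). Your version is in fact slightly more careful than the paper's (the sharper estimate $p_1=\Theta(1/n)$ and the explicit geometric-series sandwich with the $B=O(n)$ caveat), but the decomposition and the key asymptotics are identical.
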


\begin{proof}
See \ref{appendix:scaling_law} for the proof.
\end{proof}

\begin{remark} \label{remark:scaling_law}
Notice that Corollary~\ref{corollary:scaling_law} is the first time to reveal an achievable order sense (in $\Theta$ form) on per node throughput of a buffer-limited MANET, not just an upper bound (in $O$ form) provided in \cite{Herdtner_INFOCOM05}, and it indicates that as $n$ increases, the concerned MANET can still achieve a non-vanishing throughput as long as its relay-buffer size $B$ grows at least linearly with $n$.
\end{remark}

\subsection{Throughput Capacity under GTS}
We further conduct analysis on throughput capacity under the GTS scheme \cite{Kulkarni_IT04, Ciullo_TON11, Gao_2013}. With GTS scheme, all cells are divided into different groups, where any two cells in the same group have a horizontal and vertical distance of some multiple of $\epsilon$ cells. Thus, the MANET has $\epsilon^2$ groups and each group contains $J=\lfloor m^2/\epsilon ^2 \rfloor$ cells. Each group becomes active every $\epsilon^2$ time slots and a node in an active cell can transmit one packet to another node within a horizontal and vertical distance of $\nu -1$ cells. Based on the Protocol Model \cite{Gupta_IT00}, to ensure that concurrent transmissions are not interfering with each other, $\epsilon$ is determined as \cite{Gao_2013}
\begin{equation}
\epsilon=\min \{\lceil (1+\Delta) \sqrt{2} \nu+\nu \rceil,m \}. \label{eq:epsilon}
\end{equation}

Considering a given time slot and a given active cell $c$, we denote by $p_3$ the probability that there are at least one node within $c$ and another node within the transmission range of $c$, and denote by $p_4$ the probability that there are at least one source-destination pair within the transmission range of $c$ and for each of such pair(s), at least one of its two nodes is within $c$. Based on the results of \cite{Gao_2013}, we have
\begin{align}
& p_3=\frac{1}{m^{2n}}[m^{2n}-(m^2-1)^n-n(m^2-l)^{n-1}], \label{eq:p3_GTS} \\
& p_4=\frac{1}{m^{2n}}[m^{2n}-(m^4-2l+1)^{n/2}], \label{eq:p4_GTS}
\end{align} 
where $l=(2\nu -1)^2$. Notice also that $p_{sd}=\frac{J}{n}p_4$, $p_{sr}=\frac{\alpha J}{n}(p_3-p_4)$ and $p_{rd}=\frac{(1-\alpha)J}{n}(p_3-p_4)$. By substituting these results into (\ref{eq:throughput_capacity}), the throughput capacity under the GTS scheme is then determined as
\begin{equation}
T_c=\frac{J}{n}p_4+\frac{\alpha J}{n}(p_3-p_4) \left( 1-\frac{\mathrm{C}_B \cdot \beta^B}{\sum_{i=0}^B {\mathrm{C}_i \cdot \beta^i}}\right). \label{eq:tc_GTS}
\end{equation}

We can see that when $\alpha=0.5$ and $B \to \infty$, then (\ref{eq:tc_GTS}) is reduced to the capacity result in \cite{Gao_2013}, i.e., $T_c=\frac{J(p_3+p_4)}{2n}$.
Based on the proofs similar to that of Theorem~\ref{theorem:optimization} and Corollary \ref{corollary:scaling_law}, we have the following corollary regarding the optimal throughput capacity $T_c^*$ and scaling law under the GTS scheme.
\begin{corollary} 
For a concerned MANET, its optimal throughput capacity $T_c^*$ under the GTS scheme is determined as
\begin{equation}
T_c^*=\frac{J}{n}p_1+\frac{J(p_3-p_4)}{n(1+\gamma^*)}\frac{h(\gamma^*)}{h(\gamma^*)+\mathrm{C}_B}, \label{eq:tc^*_GTS}
\end{equation}
and the corresponding optimal transmission ratio $\alpha^*$ is given by $\alpha^*=\frac{1}{1+\gamma^*}$, where $h(\gamma)$ and $\gamma^*$ are determined by (\ref{eq:h(gamma)}) and (\ref{eq:gamma^*}), respectively. The order of throughput capacity $T_c(n)$ under the GTS scheme is determined as
\begin{equation}
T_c(n)=\Theta\left( \frac{B}{n}\right).
\label{eq:scaling_law_GTS}
\end{equation}
\end{corollary}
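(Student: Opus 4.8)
The plan is to obtain this corollary by transferring the two corresponding LTS results, exploiting the fact that the GTS capacity (\ref{eq:tc_GTS}) depends on the control parameter $\alpha$ in exactly the same way as the LTS capacity (\ref{eq:tc_LTS}).

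First, for the optimal capacity, I would observe that both (\ref{eq:tc_LTS}) and (\ref{eq:tc_GTS}) have the form $T_c=C_1+C_2\,\alpha\bigl(1-\frac{\mathrm{C}_B\beta^B}{\sum_{i=0}^B\mathrm{C}_i\beta^i}\bigr)$ with $\alpha$-independent constants $C_1,C_2\ge 0$: for LTS $(C_1,C_2)=(p_1/d,(p_0-p_1)/d)$, while for GTS $(C_1,C_2)=\bigl(\tfrac{J}{n}p_4,\tfrac{J}{n}(p_3-p_4)\bigr)$. Hence, under the substitution $\gamma=(1-\alpha)/\alpha$, the maximization over $\alpha\in[0,1]$ is the identical scalar problem $\min_{\gamma\ge0}g(\gamma)$ with $g(\gamma)=(1+\gamma)\bigl(1+\mathrm{C}_B/h(\gamma)\bigr)$ already solved in the proof of Theorem~\ref{theorem:optimization}. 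Thus the optimizer $\gamma^*$ obeys the same stationarity condition (\ref{eq:gamma^*}), $\alpha^*=1/(1+\gamma^*)$, and the maximal value of the $\alpha$-factor equals $h(\gamma^*)/[(1+\gamma^*)(h(\gamma^*)+\mathrm{C}_B)]$; substituting the GTS constants reproduces (\ref{eq:tc^*_GTS}) (whose leading term should read $\tfrac{J}{n}p_4$, consistent with (\ref{eq:tc_GTS})).

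Next, for the scaling law I would mirror the proof of Corollary~\ref{corollary:scaling_law}, the only GTS-specific input being the growth orders of $J$, $p_3$ and $p_4$. With node/cell density $d=n/m^2=\Theta(1)$ and the transmission parameters $\nu,\Delta,l$ held constant, (\ref{eq:epsilon}) gives $\epsilon=O(1)$, whence $J=\lfloor m^2/\epsilon^2\rfloor=\Theta(n)$ and $J/n=\Theta(1)$; expanding (\ref{eq:p3_GTS}) and (\ref{eq:p4_GTS}) for large $n$ with $m^2=\Theta(n)$ yields $p_3=\Theta(1)$ and $p_4=\Theta(1/n)$, which are exactly the orders of $p_0$ and $p_1$ in the LTS analysis. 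Consequently $p_{sd}=\tfrac{J}{n}p_4=\Theta(1/n)$ and, at $\alpha=0.5$, $p_{sr}=\tfrac{J}{2n}(p_3-p_4)=\Theta(1)$.

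Finally I would sandwich $T_c(n)$. For the lower bound, Corollary~\ref{corollary:tc_alpha_0.5} gives $T_c^*\ge T_c|_{\alpha=0.5}=p_{sd}+p_{sr}\tfrac{B}{n-2+B}=\Omega(B/n)$, since $\tfrac{B}{n-2+B}=\Theta(B/n)$ in the relevant regime $B=O(n)$. For the upper bound I would use the uniform estimate $\alpha\bigl(1-\tfrac{\mathrm{C}_B\beta^B}{\sum_{i=0}^B\mathrm{C}_i\beta^i}\bigr)\le \tfrac{B}{n-2}$, which follows term-by-term from the ratio $\mathrm{C}_{j-1}/\mathrm{C}_j=j/(n-3+j)\le B/(n-2)$ after reindexing the numerator and denominator; then $p_{sr}(1-p_b)\le\tfrac{J}{n}(p_3-p_4)\tfrac{B}{n-2}=O(B/n)$, and together with $p_{sd}=\Theta(1/n)=O(B/n)$ this yields $T_c^*=O(B/n)$. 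Combining the two bounds gives $T_c(n)=\Theta(B/n)$. I expect the main obstacle to be the GTS-specific asymptotics: verifying that the rather different-looking closed forms (\ref{eq:p3_GTS}) and (\ref{eq:p4_GTS}) genuinely reproduce the $\Theta(1)$ and $\Theta(1/n)$ orders of $p_0$ and $p_1$, since it is precisely this matching of orders that licenses reusing the LTS optimization and scaling arguments wholesale.
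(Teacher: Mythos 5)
Your proposal is correct and follows exactly the route the paper intends: the paper gives no explicit proof of this corollary, stating only that it follows ``based on the proofs similar to that of Theorem~\ref{theorem:optimization} and Corollary~\ref{corollary:scaling_law},'' which is precisely the transfer argument you carry out (same $\alpha$-dependence, hence the same $\min_{\gamma\ge 0}g(\gamma)$ problem, plus the order checks $J/n=\Theta(1)$, $p_3=\Theta(1)$, $p_4=\Theta(1/n)$ matching $p_0,p_1$). You in fact supply more than the paper does — notably the two-sided sandwich for the scaling law via the term-by-term bound $\mathrm{C}_{j-1}/\mathrm{C}_j=j/(n-3+j)$, and the correct observation that the leading term of (\ref{eq:tc^*_GTS}) should read $\frac{J}{n}p_4$ rather than $\frac{J}{n}p_1$.
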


\section{Simulation Results and Discussions} \label{section:simulation}

In this section, we first provide the simulation results to validate our theoretical framework for the throughput capacity analysis of buffer-limited MANETs, and then apply our theoretical results to illustrate the performance of such networks.

\subsection{Simulation Setting} \label{subsection:simulation_setting}
For the validation of our framework, a C++ simulator was developed to simulate the packet delivery process in the concerned MANETs under both LTS and GTS schemes \cite{C++}. In addition to the i.i.d mobility model, the random walk model was also implemented in the simulator. Under the random walk model, at the beginning of each time slot, every node independently selects a cell among its current cell and its $8$ adjacent cells with equal probability $1/9$, then stays in it until the end of this time slot \cite{Gamal_IT06}.

\begin{figure}[!t]
    \centering
    {
    \subfloat[Throughput performance under the LTS scheme.]
    {\includegraphics[width=3.0in]{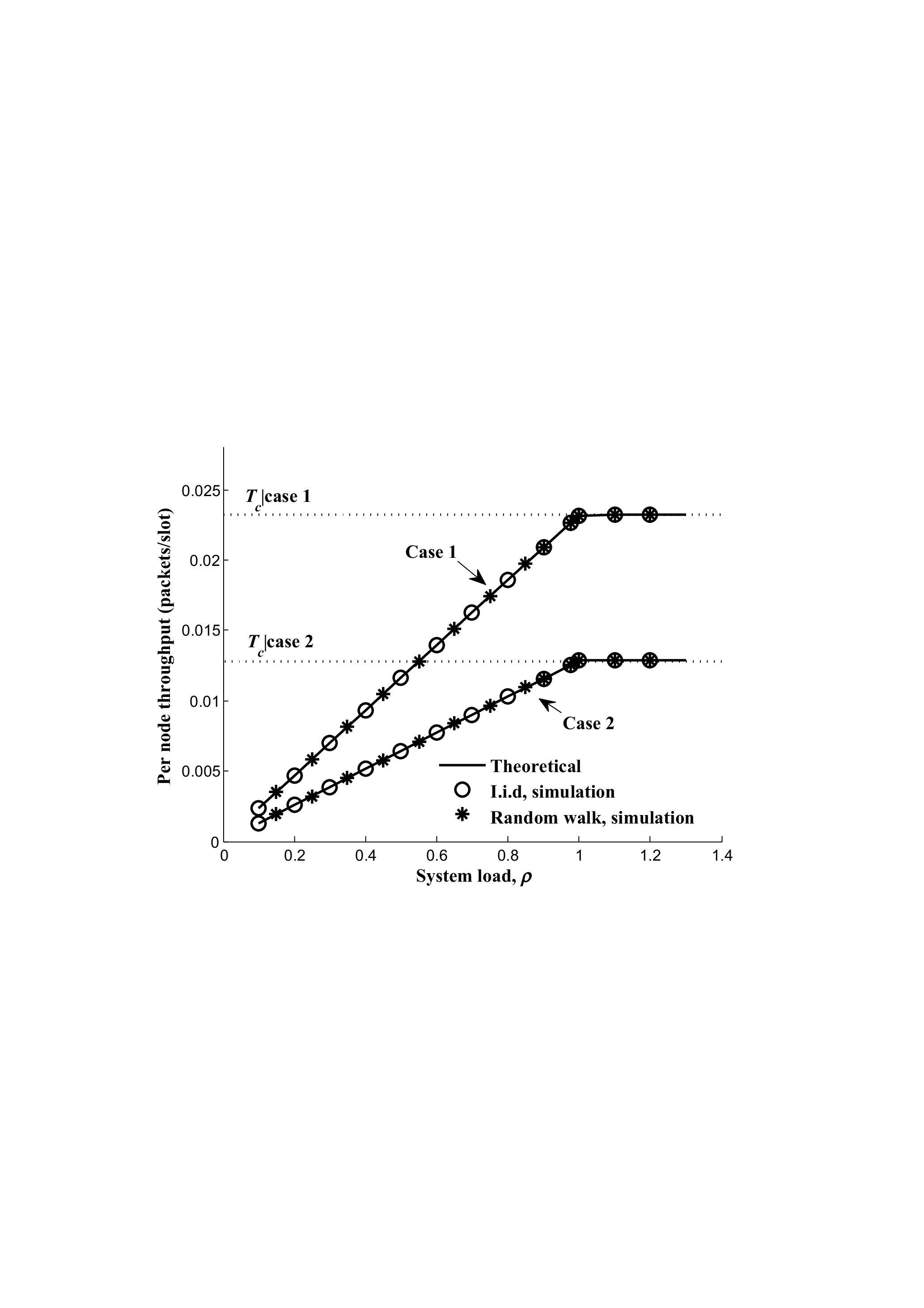} \label{subfig:throughput_LTS}}
   	\hspace{0.15in}
    \subfloat[Throughput performance under the GTS scheme.]
    {\includegraphics[width=3.0in]{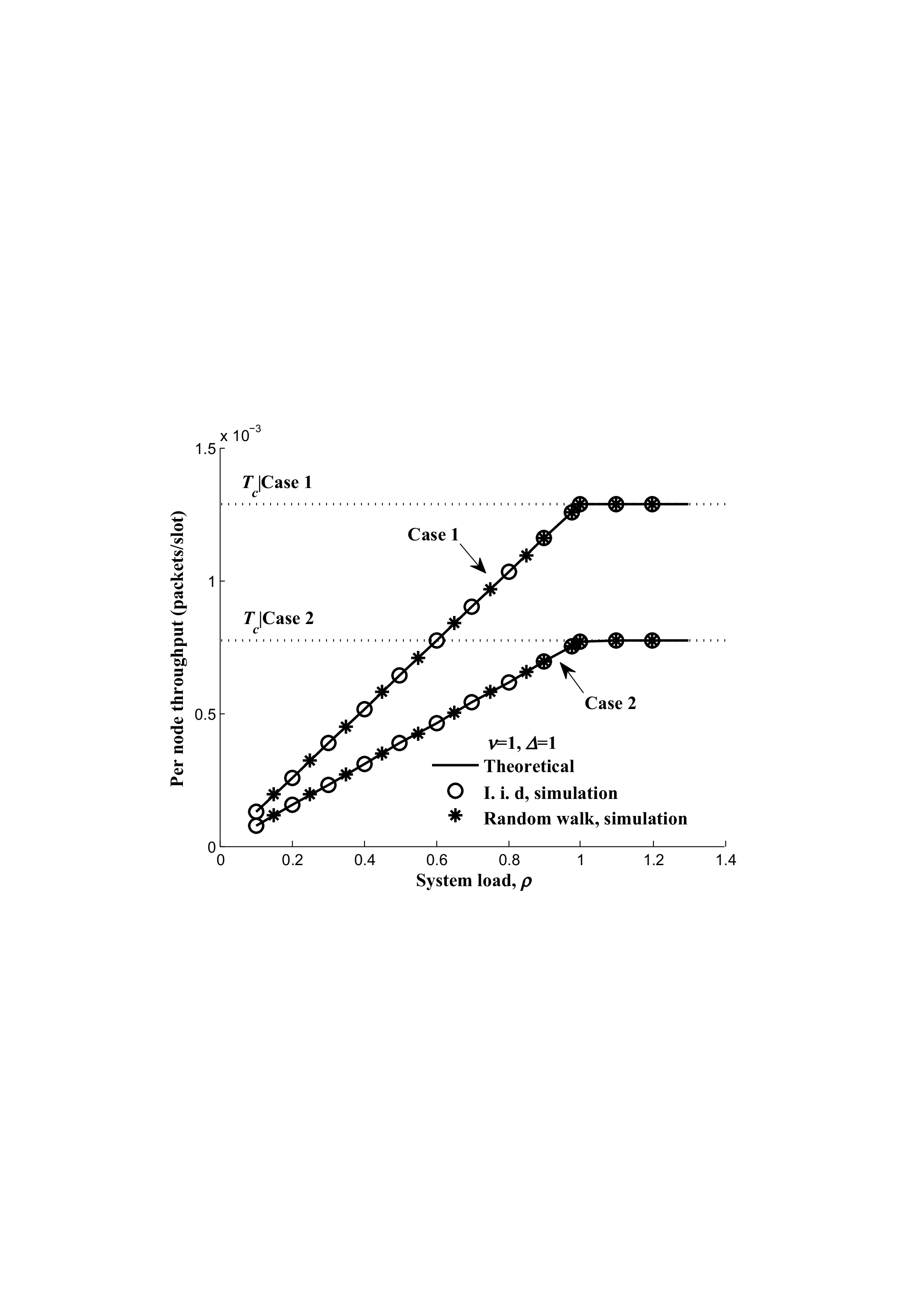} \label{subfig:throughput_GTS}}
    }
    \caption{Throughput performance under the LTS and GTS schemes. Case 1: $n=72, m=6, B=5, \alpha=0.5$. Case 2: $n=200, m=10, B=8, \alpha=0.3$.}
    \label{fig:throughput_simulation}
\end{figure}

Two network scenarios of ($n=72,m=6,B=5,\alpha=0.5$) and ($n=200,m=10,B=8,\alpha=0.3$) are considered in the simulation, where we set $\nu=1$ and $\Delta=1$ for the GTS scheme \cite{ns-2}. To simulate the throughput, we focus on a specific node and count its received packets over a period of $2 \times 10^8$ time slots, and then calculate the averaged number of packets this node can receive per time slot. The system load $\rho$ is defined as $\rho=\lambda/T_c$, and $T_c$ is given by (\ref{eq:tc_LTS}) and (\ref{eq:tc_GTS}) for the LTS and GTS, respectively. 

\subsection{Simulation Results} \label{subsection:simulation_results}

To validate the throughput capacity results (\ref{eq:tc_LTS}) and (\ref{eq:tc_GTS}), we provide plots of throughput versus system load $\rho$ in Fig.~\ref{fig:throughput_simulation}. It can be observed from Fig.~\ref{fig:throughput_simulation} that the simulation results agree well with the theoretical ones under both LTS and GTS schemes, indicating that our framework is highly efficient in capturing the throughput behaviors of concerned buffer-limited MANETs. We can see from Fig.~\ref{fig:throughput_simulation} that just as Theorem~\ref{theorem:throughput_capacity} predicates that for a concerned MANETs, its throughput increases linearly with $\rho$ when $\rho \leq 1$ and then keeps as a constant $T_c$ determined by (\ref{eq:general_throughput_capacity}) when $\rho>1$. A further observation of Fig.~\ref{fig:throughput_simulation} indicates that for a network under the random walk mobility model, its
throughput performance is very similar to that under the i.i.d mobility model. This is expected since according to Remark~\ref{remark:applicable} and \cite{Neely_IT05, Neely_05JSEC}, nodes in the network with i.i.d mobility model and random walk model have the same steady distribution, leading to the same throughput capacity performance under these two mobility models.

\subsection{Throughput Capacity} \label{subsection:throughput_capacity}
\begin{figure}[!t]
\centering\includegraphics[width=3.0in]{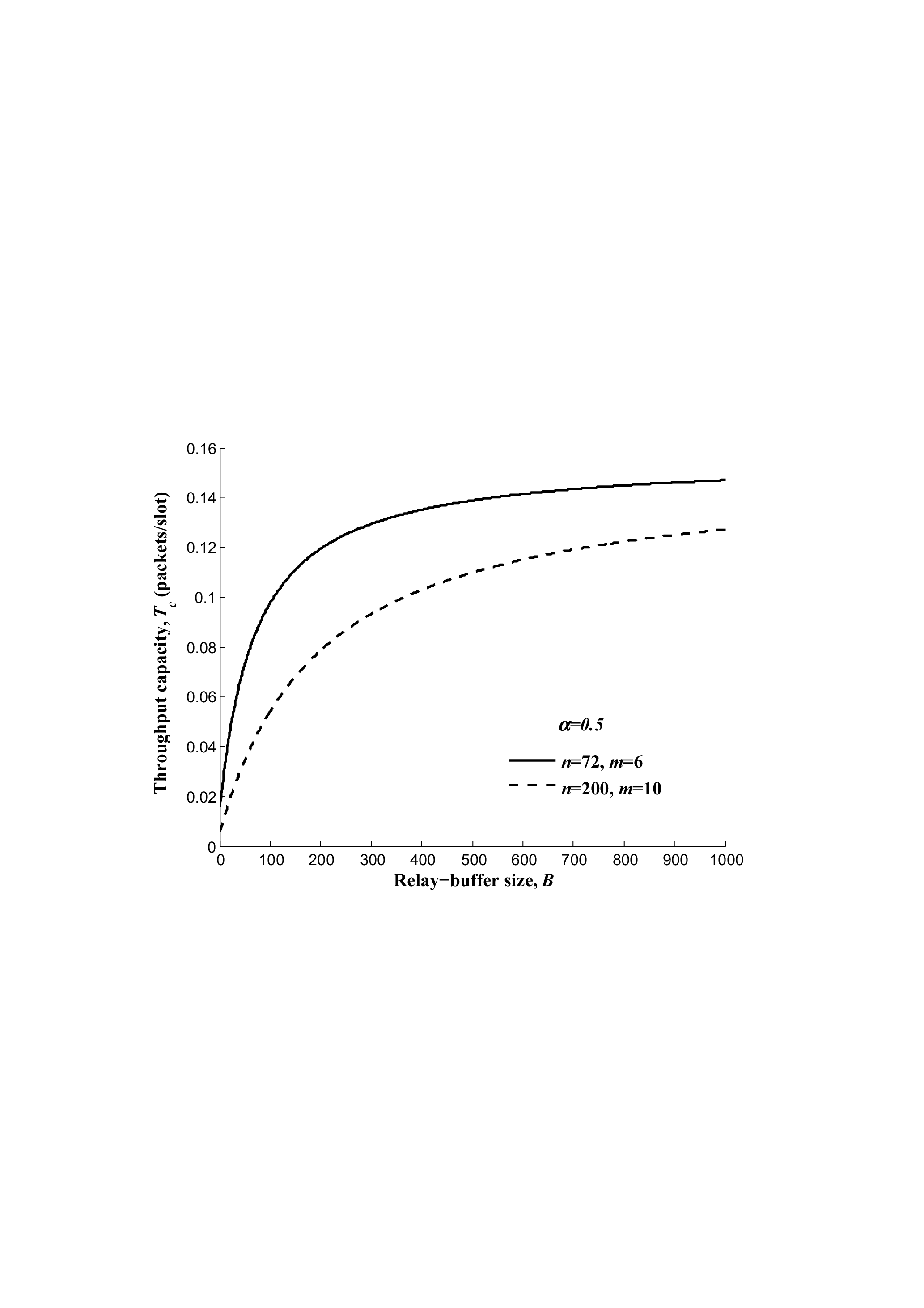}\caption{Throughput capacity $T_c$ vs. relay-buffer size $B$.}
\label{fig:tc_vs_B}
\end{figure}
\begin{figure}[!t]
\centering\includegraphics[width=3.0in]{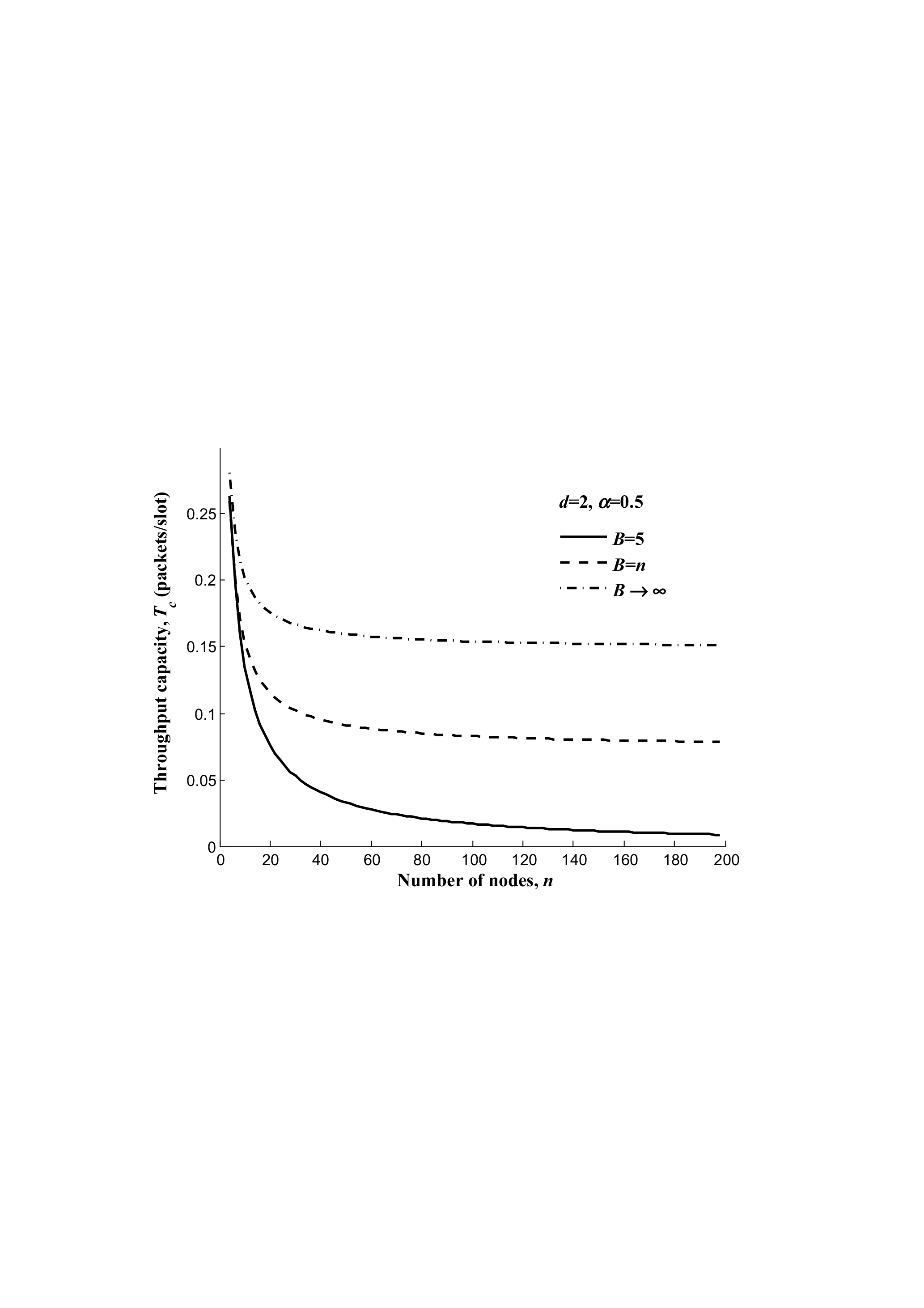}\caption{Throughput capacity $T_c$ vs. number of nodes $n$.}
\label{fig:tc_vs_n}
\end{figure}

With the help of our theoretical results, we illustrate here the impacts of network parameters on the throughput capacity. Notice that for a concerned MANET its overall throughput behavior under the LTS is very similar to that under the GTS, so we consider the LTS scheme here only for illustration.

We first summarize in Fig.~\ref{fig:tc_vs_B} how throughput capacity $T_c$ varies with relay-buffer size $B$ under two network scenarios of $(n = 72,m = 6)$ and $(n = 200,m = 10)$, where $\alpha$ is fixed as $0.5$. We can see from Fig.~\ref{fig:tc_vs_B} that just as discussed in Corollary~\ref{corollary:B_grow}, the throughput capacity of a buffer-limited MANET can be improved by adopting a larger relay-buffer in such network. A careful observation of Fig.~\ref{fig:tc_vs_B} shows that as $B$ increases the capacity $T_c$ first increases quickly and then gradually converges to a constant determined by Corollary~\ref{corollary:tc_infinity}. This observation indicates we can determine a suitable buffer size $B$ according to the requirement on network capacity such that a graceful trade-off between capacity performance and buffer cost can be achieved. 

To further illustrate the impact of buffer constraint on the throughput capacity, we show in Fig.~\ref{fig:tc_vs_n} the relationship between $T_c$ and $n$ under three typical relay-buffer settings, i.e., $B$ is fixed as a constant ($5$ here), $B = n$ and $B \to \infty$.  We can see from Fig.~\ref{fig:tc_vs_n} that in general $T_c$ decreases as $n$ increases, but as $n \to \infty$, $T_c$ vanishes to $0$ when $B$ is fixed, while a non-zero constant throughput capacity can still be achieved when $B = n$ or $B \to \infty$. These behaviors are expected, since the result in Corollary~\ref{corollary:scaling_law} indicates that to achieve a non-vanishing throughput capacity in the concerned MANET, its relay-buffer size $B$ should grow at least linearly with $n$.

\begin{figure}[!t]
    \centering
    \includegraphics[width=3.0in]{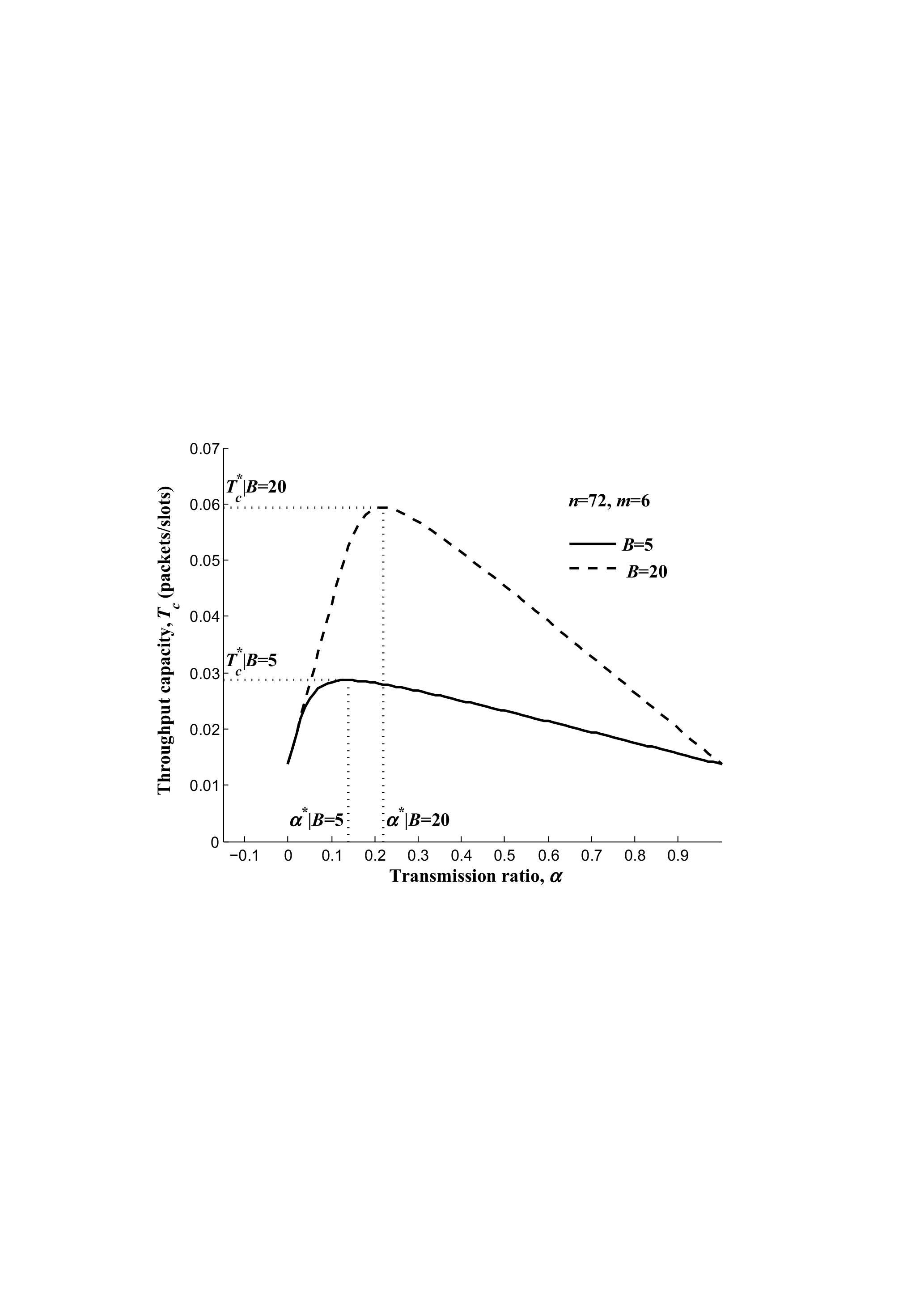}
    \caption{Throughput capacity $T_c$ vs. transmission ratio $\alpha$.}
		\label{fig:tc_vs_alpha}
\end{figure}

\subsection{Optimal Throughput Capacity} \label{subsection:optimal_tc}

\begin{figure}[!t]
    \centering
    {
    \subfloat[Optimal transmission ratio $\alpha^*$ vs. relay-buffer size $B$.]
    {\includegraphics[width=3.0in]{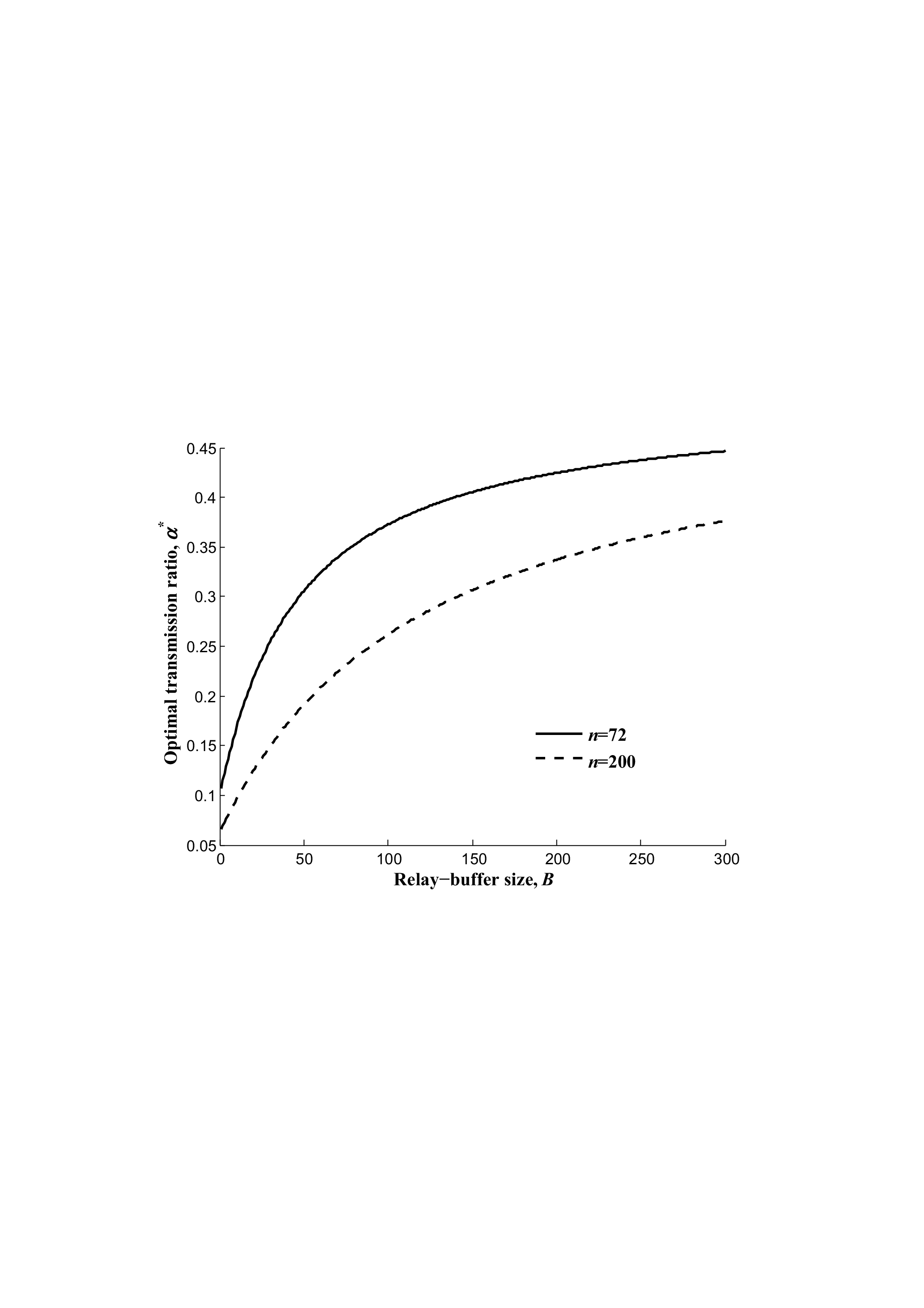} \label{subfig:alphaopt_vs_B}}
   	\hspace{0.15in}
    \subfloat[Optimal transmission ratio $\alpha^*$ vs. number of nodes $n$.]
    {\includegraphics[width=3.0in]{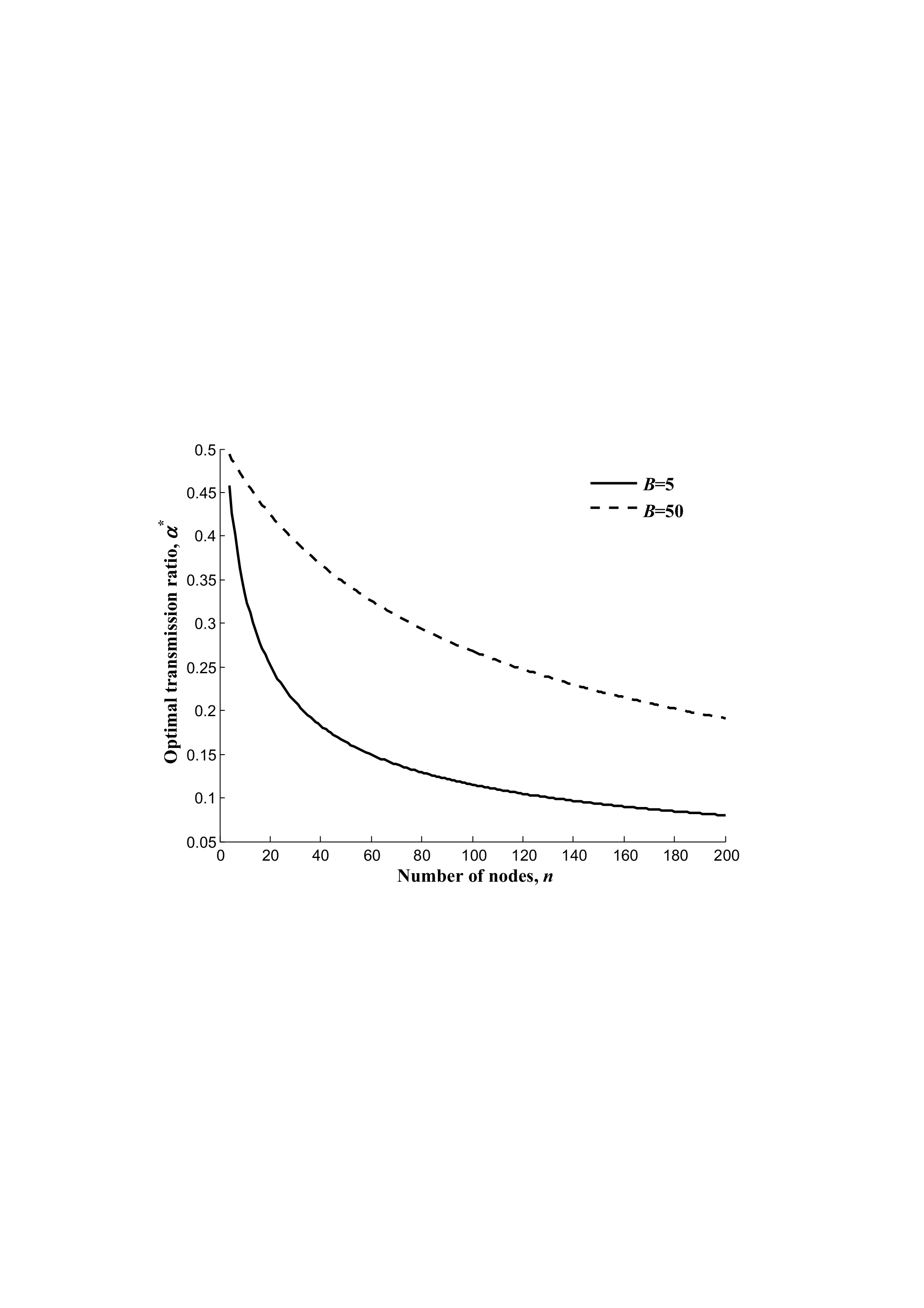} \label{subfig:alphaopt_vs_n}}
    }
    \caption{Optimal transmission ratio $\alpha^*$ vs. $B$ and $n$.}
    \label{fig:optimal_ratio}
\end{figure}

To illustrate the optimal throughput capacity, we show in Fig.~\ref{fig:tc_vs_alpha} the impact of transmission ratio $\alpha$ on throughput capacity $T_c$ under the settings of $n=72$, $m=6$ and $B=\{5,20\}$. We can see from Fig.~\ref{fig:tc_vs_alpha} that under a given setting of $B$, as $\alpha$ increases $T_c$ first increases and then decreases, and just as discussed in Theorem~\ref{theorem:optimization} that there exists an optimal $\alpha^*$ to achieve the optimal throughput capacity $T_c^*$ . This is mainly due to the reason that the effects of $\alpha$ on $T_c$ are two folds. On one hand, a larger $\alpha$ will lead to a higher probability of conducting S-R transmission; on the other hand, a larger $\alpha$ will result in a higher RBP thus a lower opportunity of conducting the S-R transmission. As a summary, in order to improve the throughput performance of a buffer-limited MANET, nodes should cooperate with each other, and they should be neither too selfish nor too selfless.

Based on the results of Theorem~\ref{theorem:optimization}, we illustrate in Fig.~\ref{fig:optimal_ratio} how the optimal transmission ratio $\alpha^*$ is related to $B$ and $n$. We can see that just as proved in Corollary~\ref{corollary:optimal_alpha} that $\alpha^*$ increases as $B$ grows while it decreases as $n$ grows, and the optimal transmission ratio never exceeds $0.5$. These behaviors indicate that in a network with the fixed number of nodes $n$, if we upgrade the capacity of each node by adopting a larger relay-buffer, we should accordingly allocate a higher probability for S-R transmission (i.e., nodes should be more selfish), to achieve the optimal throughput capacity. On the other hand, when the relay-buffer size of each node is fixed, if we increase the scale of the network by accommodating more nodes, we should accordingly increase the probability of R-D transmission (i.e., nodes should be more selfless), to release the relay-buffer space and thus guarantee the optimal throughput capacity there.

\section{Related Work} \label{section:related_work}
A significant amount of work has been devoted to the performance analysis of wireless ad hoc networks, among which some initial studies focused on the buffer-limited network scenarios.

Herdtner and Chong \cite{Herdtner_INFOCOM05} explored the throughput-storage tradeoff in MANETs and showed that the per node capacity under the finite buffer constraint cannot achieve $\Theta(1)$ even though node mobility is utilized. Later, Eun and Wang \cite{Eun_TON08} considered TCP/AQM system and developed a doubly-stochastic analytical model to study the tradeoff among link utilization, packet loss, and buffer size. Subramanian \emph{et al.} \cite{Subramanian_WIPOT09} developed a theoretical framework for throughput analysis in  Delay Tolerant Networks (DTNs). They derived closed-form expressions for the per node throughput capacity of a sparse network which consists of one source-destination pair and several mobile relay nodes with limited buffer size, and further extended their results to the multi-cast network scenarios \cite{Subramanian_ISIT09}. Wang \emph{et al.} \cite{Wang_TPDS13} studied the node buffer occupancy behaviors in static random wireless networks with intermittent connectivity and provided some scaling results of fundamental achievable lower bound for the occupied buffer size. Krifa \emph{et al.} \cite {Krifa_SECON08} focused on the buffer management policies in DTNs. They demonstrated that drop-tail and drop-front policies are sub-optimal and meanwhile proposed an optimal buffer management policy based on global knowledge about the network. Following this line, Elwhishi \emph{et al.} \cite{Elwhishi_TPDS13} developed a new message scheduling framework for Epidemic \cite{Vahdat_00} and Spray-Wait \cite{Spyropoulos_SIGCOMM05} forwarding routings in DTNs with finite buffer to optimize either the message delivery ratio or message delivery delay.

\section{Conclusion} \label{section:conclusion}
In this paper, we first revealed the inherent relationship between the throughput capacity and relay-buffer blocking probability in a buffer-limited MANET with the 2HR-$\alpha$ scheme, and then developed an Embedded Markov Chain-based framework to fully characterize the complicated packet delivery processes of the concerned MANET. Based on this framework, we derived the throughput capacity in closed form and further conduct cases studies under two typical transmission scheduling schemes to illustrate the impacts of some key network parameters on throughput capacity. It is expected the theoretical framework developed in this paper will be also helpful for exploring the throughput capacity of buffer-limited MANETs under other mobility models and other transmission schemes. The results in this paper indicate that in large scale MANETs a non-zero constant throughput capacity can still be guaranteed as long as its relay-buffer size grows at least linearly with network size. Another interesting finding of this paper is that for throughput capacity optimization in such MANETs, the optimal setting of transmission ratio in the 2HR-$\alpha$ scheme there increases with the relay-buffer size but decreases with the network size, and it never exceeds $0.5$.

Notice that the theoretical framework and closed-form results for per node throughput capacity developed in this paper is based on the i.i.d mobility model, so one of our future research directions is to develop theoretical models for other more realistic mobility models, like the inter-meeting time based mobility model.

\appendix

\section{Proof of Lemma~\ref{lemma:transition}} \label{appendix:transition}
Notice that each local queue is a Bernoulli/Bernoulli queue with exogenous packet arrival rate $\lambda$, so the output process of the local queue is also a Bernoulli flow with rate $\lambda$ due to the reversible property of Bernoulli/Bernoulli queue \cite{Daduna_BOOK01}.  

Based on the property of the i.i.d mobility model we know that for a specific node, except itself and its
destination, each of the remaining $n-2$ nodes will equal likely to serve as its relay. Similarly, for each node serving as a relay, except itself and its source, all the remaining $n-2$ nodes will equal likely to forward packets to it. From (\ref{eq:mu_s}) we know that the ratio of S-R transmission to the service rate $\mu_s(\lambda)$ is $\frac{p_{sr}\left(1-P_{b}(\lambda)\right)}{\mu_{s}(\lambda)}$. Hence, the packet arrival rate at a relay queue $\lambda_R$ is determined as

\begin{align}
\lambda_R &= (n-2)\lambda \cdot \frac{p_{sr}\left(1-p_b(\lambda)\right)}{\mu_{s}(\lambda)} \big/ (n-2) \nonumber \\
& = \rho_s(\lambda) p_{sr} \left( 1-p_b(\lambda) \right).
\label{eq:lambda_R}
\end{align}

We denote by $p_{(i,k),\Omega_{i+1}}$ the transition probability from state $(i,k)$ to set $\Omega_{i+1}$, $0 \leq i <B$. Since when a relay queue is full, its packet arrival rate will be $0$. Thus, we have
\begin{align}
& p_{(i,k),\Omega_{i+1}} \cdot \left( 1-p_b(\lambda) \right) + 0 \cdot p_b(\lambda)=\lambda_R,  \nonumber \\
\Rightarrow & p_{(i,k),\Omega_{i+1}}=\rho_s(\lambda) \cdot p_{sr}. \label{eq:lambda_r}
\end{align}

Notice that the transition probability $p_{\Omega_i,\Omega_j}$ of the EMC is the ``set-averaged'' transition probability of the original Markov chain \cite{Hachigian_1963}, then we have

\begin{equation}
p_{\Omega_i,\Omega_j}=\sum_{k=1}^i {p_{(i,k),\Omega_j} \cdot P\left((i,k) \big| \Omega_i \right)}, 
\label{eq:trans}
\end{equation}
where $P\left((i,k) \big| \Omega_i \right)$ is the conditional probability that relay queue is in state $(i,k)$ given that it belongs to the set $\Omega_i$. Substituting (\ref{eq:lambda_r}) into (\ref{eq:trans}) we have
\begin{equation}
p_{\Omega_i,\Omega_{i+1}}=\rho_s(\lambda) \cdot p_{sr}. \nonumber
\end{equation}

A node gets a R-D transmission opportunity with probability $p_{rd}$, and due to the i.i.d mobility model, this opportunity arises for each of the $n-2$ destination nodes with equal probability. Thus, we have 
\begin{equation}
p_{(i,k),\Omega_{i-1}}=k \cdot \frac{p_{rd}}{n-2}.
\label{eq:transition_backward}
\end{equation}
Based on (\ref{eq:trans}), it is clear that in order to obtain $p_{\Omega_i,\Omega_{i-1}}$, we should derive the conditional probability $P \left ( (i,k) \big| \Omega_i \right )$. To address this issue, we utilize the \emph{Occupancy} approach (Chapter 1 in \cite{Stark_BOOK02}). When the relay queue has $i$ packets being buffered, where each packet may be destined for any one of the $n-2$ destination nodes, the number of all possible cases $E_{\Omega_i}$ is given by

\begin{equation}
E_{\Omega_i}=\binom{n-3+i}{i}.
\label{eq:events_all}
\end{equation}
Suppose that these $i$ packets are destined for $k$  distinct destination nodes, the number of cases $E_{(i,k)}$ is given by
\begin{equation}
E_{(i,k)}=\binom{n-2}{k}\cdot \binom{(k-1)+(i-k)}{i-k}.
\label{eq:events_part}
\end{equation}
Due to the i.i.d mobility model, each of these cases occurs with equal probability. According to the \emph{Classical Probability}, $P \left ( (i,k) \big| \Omega_i \right )$ is then determined as 

\begin{equation}
P \left ( (i,k) \big| \Omega_i \right )=\frac{E_{(i,k)}}{E_{\Omega_i}}
=\frac{\binom{n-2}{k}\cdot\binom{i-1}{k-1}}{\binom{n-3+i}{i}}.
\label{eq:conditional_pro}
\end{equation}
It can be easily verified that $\sum \limits_{k\leq i} {P \left ( (i,k) \big| \Omega_i \right )}=1$.
Substituting (\ref{eq:conditional_pro}) into (\ref{eq:trans}) we have

\begin{align}
p_{\Omega_i,\Omega_{i-1}} &=\sum_{k=1}^{i} \left\{ \frac{ \binom{n-2}{k} \cdot \binom{i-1}{k-1}}
{\binom{n-3+i}{i}}\cdot\frac{k p_{rd}}{n-2}\right\} \nonumber \\
&= \frac{p_{rd}}{\binom{n-3+i}{i}} \cdot \sum_{k=0}^{i-1}\left\{ \binom{n-3}{k} \cdot \binom{i-1}{k}\right\} \nonumber \\
&= p_{rd} \cdot \frac{\binom{n-4+i}{i-1}}{\binom{n-3+i}{i}} = \frac{i}{n-3+i} \cdot p_{rd}. \nonumber
\end{align}

\section{Proof of Lemma~\ref{lemma:limit_distribution}} \label{appendix:limit_distribution}
Based on the Lemma~\ref{lemma:transition}, the one-step transition probability matrix $\mathbf{P}$ of the EMC is given by
\begin{equation}       
\mathbf{P}=\left[                 
\begin{array}{cccc}   
p_{\Omega_0,\Omega_0} & p_{\Omega_0,\Omega_1} &   &    \\  
p_{\Omega_1,\Omega_0} & p_{\Omega_1,\Omega_1} &  p_{\Omega_1,\Omega_2}  &    \\  
                      & \ddots & \ddots & \ddots  \\

  &  & p_{\Omega_B,\Omega_{B-1}} & p_{\Omega_B,\Omega_B}
\end{array}
\right].    \label{eq:transition_matrix}             
\end{equation}

From the transition matrix (\ref{eq:transition_matrix}) and the state machine in Fig.~\ref{fig:embedded_markov}, we can see that: 1) the EMC is \emph{irreducible}; 2) each state $\Omega_i$ is \emph{recurrent}, $0 \leq i \leq B$; 3) the period of each state $\Omega_i$ is $1$, so each state is \emph{aperiodic}. Based on these properties, we know that the EMC is ergodic, so its limiting distribution exists and is unique, which is just the stationary distribution of the EMC \cite{Haggstrom_BOOK02}. Thus, we have

\begin{align}
            & \mathbf{\Pi}\cdot\mathbf{P}=\mathbf{\Pi}, \nonumber \\
\Rightarrow & \left\{ 
\begin{aligned}
&\rho_s(\lambda) \cdot p_{sr} \cdot \pi(\Omega_0)=\frac{p_{rd}}{n-2} \cdot \pi(\Omega_1), \\
&\rho_s(\lambda) \cdot p_{sr} \cdot \pi(\Omega_1)=\frac{2\cdot p_{rd}}{n-1} \cdot \pi(\Omega_2), \\
&\cdots, \\
&\rho_s(\lambda) \cdot p_{sr} \cdot \pi(\Omega_{B-1})=\frac{B\cdot p_{rd}}{n-3+B} \cdot \pi(\Omega_B),
\end{aligned}
\right.  \\
\Rightarrow & \pi(\Omega_i) =\mathrm{C}_i \cdot \beta^i \cdot \rho_s(\lambda)^i \cdot \pi(\Omega_0),
\label{eq:balance_equation}
\end{align}

Combining (\ref{eq:balance_equation}) with the normalization equation $\sum_{i=0}^{B}{\pi(\Omega_i)}=1$, the results (\ref{eq:pi_0}) and (\ref{eq:pi_i}) then follow.


\section{Proofs of Corollaries~\ref{corollary:B_grow}, \ref{corollary:tc_alpha_0.5} and \ref{corollary:tc_infinity}} \label{appendix:corollaries}

\textbf{Proof of Corollary~\ref{corollary:B_grow}:} Let $s_k=\frac{\mathrm{C}_k \cdot \beta^k} {\sum_{i=0}^k{\mathrm{C}_i \cdot \beta^i}}$, then 

\begin{align}
\frac{s_{k+1}}{s_k} &= \frac{\mathrm{C}_{k+1} \beta^{k+1} \sum_{i=0}^k{\mathrm{C}_i \cdot \beta^i}}{\mathrm{C}_k \beta^k \sum_{i=0}^{k+1} {\mathrm{C}_i \cdot \beta^i}} \nonumber \\
 & = \frac{\sum_{i=0}^k{ (n-2+k) \mathrm{C}_i \cdot \beta^{i+1}}}{1+k+\sum_{i=0}^k{(k+1)\mathrm{C}_{i+1} \cdot \beta^{i+1}}}, \nonumber
\end{align}  
Since
\begin{equation}
(k+1)\mathrm{C}_{i+1}=(k+1)\frac{n-2+i}{i+1} \cdot \mathrm{C}_i,  \nonumber
\end{equation}
and
\begin{align}
& (k+1)(n-2+i)-(i+1)(n-2+k) \nonumber \\
= &(n-3)(k-i) \geq 0, \nonumber
\end{align}
then
\begin{align}
&(k+1)\mathrm{C}_{i+1} \geq (n-2+k) \mathrm{C}_i, \nonumber \\
&\frac{s_{k+1}}{s_k} < 1, \nonumber
\end{align}
Substituting the result into $(\ref{eq:throughput_capacity})$, the Corollary~\ref{corollary:B_grow} then follows.

\textbf{Proof of Corollary~\ref{corollary:tc_alpha_0.5}:} When $\alpha=0.5$, then $\beta=1$, and (\ref{eq:throughput_capacity}) is simplified as

\begin{equation}
T_c=p_{sd}+p_{sr} \left( 1-\frac{\mathrm{C}_B}{\sum_{i=0}^B {\mathrm{C}_i}}\right).
\label{eq:tc_1}
\end{equation}
Since
\begin{align}
&\sum_{i=0}^B {\mathrm{C}_i} =\frac{1}{(n-3)!} \left[(n-3)\times (n-4) \cdots \times 1 \right. \nonumber \\
& +\left. (n-2)\times \cdots \times 2 + \cdots + (n-3+B)\times \cdots \times (B+1) \right]  \nonumber \\
&=\frac{1}{(n-3)!} \cdot \frac{(n-2+B)\times \cdots \times (B+1)}{n-2} \nonumber \\
&=\binom{n-2+B}{B}, \label{eq:sum_simple}
\end{align}
substituting (\ref{eq:sum_simple}) into (\ref{eq:tc_1}), the Corollary~\ref{corollary:tc_alpha_0.5} then follows. 

\textbf{Proof of Corollary~\ref{corollary:tc_infinity}:} For the case $\alpha=0.5$, since $\lim\limits_{B\to\infty}{\frac{B}{n-2+B}}=1$, substituting it into (\ref{eq:tc_alpha_0.5}) we have 
\begin{equation}
\mathop{T_c}\limits_{\alpha=0.5, B\to \infty}=p_{sd}+p_{sr}. \nonumber
\end{equation}

For the case $\alpha<0.5$, we have $\beta<1$ and
\begin{align}
&\sum_{i=0}^B {\mathrm{C}_i \cdot \beta^i} =\frac{1}{(n-3)!} \nonumber \\
&\times \left[(n-3)\times \cdots \times 1 \times \beta ^0 +(n-2)\times \cdots \times 2 \times \beta ^1 \right. \nonumber \\
&\left. + \cdots + (n-3+B)\times \cdots \times (B+1) \times \beta ^B \right]  \nonumber \\
&=\frac{1}{(n-3)!} \cdot \left(\beta^{n-3}+\beta^{n-2}+\cdots+\beta^{n-3+B} \right)^{(n-3)}  \nonumber \\
&=\frac{1}{(n-3)!} \cdot \left( \sum_{i=0}^{n-3+B} \beta^i \right)^{(n-3)} \nonumber \\
&=\frac{1}{(n-3)!} \left(\frac{1-\beta^{n-2+B}}{1-\beta} \right) ^{(n-3)},
\label{eq:sum_beta}
\end{align}
where $f(\beta)^{(k)}$ denotes the $k$-th order derivative of $f(\beta)$. Since
\begin{equation}
\lim\limits_{B\to\infty}{1-\beta^{n-2+B}}=1, \nonumber
\end{equation}
we have
\begin{align}
\lim\limits_{B\to\infty}{\sum_{i=0}^B {\mathrm{C}_i \cdot \beta^i}} & = \frac{1}{(n-3)!} \left(\frac{1} {1-\beta} \right) ^{(n-3)}\nonumber \\
& =\frac{1} {\left(1-\beta \right)^{n-2}},
\end{align}
and then
\begin{align}
\lim\limits_{B\to\infty}{\mathrm{C}_B\beta^B(1-\beta)^{n-2}} & \leq \lim\limits_{B\to\infty}{(B+n)^n\beta^B} \nonumber \\
& \leq \lim\limits_{B\to\infty} {2^n B^n \beta^B}. 
\end{align}
Since 
\begin{equation}
\lim\limits_{x\to\infty}{x^n \beta^x} =\lim\limits_{x\to\infty}{\frac{x^n}{\frac{1}{\beta}^x}} 
=\lim\limits_{x\to\infty}{\frac{n!}{(-\ln{\beta})^n \cdot \frac{1}{\beta}^x } } =0, \nonumber
\end{equation}
substituting it into (\ref{eq:throughput_capacity}) we have
\begin{equation}
\mathop{T_c}\limits_{\alpha<0.5,B\to \infty}=p_{sd}+p_{sr} \nonumber
\end{equation}

For the case $\alpha>0.5$, we have $\beta>1$ and
\begin{align}
1-\frac{\mathrm{C}_B \cdot \beta^B}{\sum_{i=0}^B {\mathrm{C}_i \cdot \beta^i}} &= \frac{\sum_{i=0}^{B-1} {\mathrm{C}_i \cdot \beta^i}}{1+\beta \sum_{i=0}^{B-1} {\mathrm{C}_{i+1} \cdot \beta^i}} \nonumber \\
&= \frac{1}{ \frac{1}{\sum_{i=0}^{B-1} {\mathrm{C}_i \cdot \beta^i}} + \beta \cdot \frac{\sum_{i=0}^{B-1} {\mathrm{C}_{i+1} \cdot \beta^i}}{\sum_{i=0}^{B-1} {\mathrm{C}_i \cdot \beta^i}} }.  \nonumber
\end{align}
Since
\begin{align}
&\lim \limits_{B \to \infty} {\frac{1}{\sum_{i=0}^{B-1} {\mathrm{C}_i \cdot \beta^i}}} =0, \nonumber \\
&\lim \limits_{B \to \infty} {\frac{\sum_{i=0}^{B-1} {\mathrm{C}_{i+1} \cdot \beta^i}}{\sum_{i=0}^{B-1} {\mathrm{C}_i \cdot \beta^i}}} =1, \nonumber
\end{align}
then
\begin{equation}
\lim \limits_{\beta>1,B\to \infty} {1-\frac{\mathrm{C}_B \cdot \beta^B}{\sum_{i=0}^B {\mathrm{C}_i \cdot \beta^i}}} = \frac{1}{\beta}.
\end{equation}
Substituting it into (\ref{eq:throughput_capacity}) we have
\begin{align}
\mathop{T_c}\limits_{\alpha>0.5,B\to \infty} &=p_{sd}+ p_{sr} \frac{1}{\beta} \nonumber \\
&=p_{sd}+ p_{sr}\frac{\alpha}{1-\alpha}= p_{sd}+ p_{rd}. \nonumber 
\end{align}

\section{Proof of Corollary~\ref{corollary:optimal_alpha}} \label{appendix:optimal_alpha}

Considering $\gamma \in (0,1]$, the first order derivative of $g(\gamma)$ is

\begin{align}
g'(\gamma)=\frac{1}{h(\gamma)^2}\cdot  \{ \underbrace{ h(\gamma)[h(\gamma)+\mathrm{C}_B]-(1+\gamma)\mathrm{C}_B h'(\gamma) }_{(a)}  \}. \nonumber
\end{align}
For $\forall n >3$, when $B=1$, $(a)$ is determined as
\begin{equation}
(a)=\gamma(\gamma+n-2)-(1+\gamma)(n-2)=(\gamma^2-1)-(n-3)\leq 0. \nonumber
\end{equation}
When $B=k$, we assume that 
\begin{equation}
(a)=h_k(h_k+\mathrm{C}_k)-(1+\gamma)\mathrm{C}_k h'_k \leq 0 ,\nonumber
\end{equation}
where $h_k$ and $h'_k$ are the abbreviations of $h(\gamma)$ and $h'(\gamma)$ under $B=k$, respectively. When $B=k+1$, we have
\begin{align}
(a)&= h_{k+1}(h_{k+1}+\mathrm{C}_{k+1})-(1+\gamma)\mathrm{C}_{k+1} h'_{k+1} \nonumber \\
&=\gamma\cdot (h_k+\mathrm{C}_k)\cdot [\gamma(h_k+\mathrm{C}_k)+\mathrm{C}_{k+1}] \nonumber \\
& -(1+\gamma)\cdot \mathrm{C}_{k+1}\cdot [h_k+\gamma h'_k+\mathrm{C}_k] \nonumber \\
&=\underbrace{\gamma^2 h_k(h_k+\mathrm{C}_k)}_{(b_1)} + \gamma^2 \mathrm{C}_k (h_k+\mathrm{C}_k) + \underbrace{\gamma \mathrm{C}_{k+1} h_k}_{(c_1)} \nonumber \\
& + \underbrace{\gamma \mathrm{C}_k \mathrm{C}_{k+1}}_{(d_1)} -\underbrace{(1+\gamma)\mathrm{C}_{k+1}h_k}_{(c_2)}-\underbrace{\gamma(1+\gamma)\mathrm{C}_{k+1}h'_k}_{(b_2)} \nonumber \\
&-\underbrace{(1+\gamma)\mathrm{C}_k\mathrm{C}_{k+1}}_{(d_2)}. \nonumber
\end{align}
Since
\begin{align}
(b_1)-(b_2) & = \gamma [ \gamma h_k(h_k+\mathrm{C}_k)-(1+\gamma)\mathrm{C}_{k+1} h'_k] \nonumber \\
& < \gamma [h_k(h_k+\mathrm{C}_k)-(1+\gamma)\mathrm{C}_k h'_k] \leq 0, \nonumber
\end{align}
combining $(c_1)$,$(c_2)$ and $(d_1)$,$(d_2)$ we have
\begin{align}
(a) &< \gamma^2 \mathrm{C}_k (h_k+\mathrm{C}_k) - \mathrm{C}_{k+1} h_k - \mathrm{C}_k \mathrm{C}_{k+1} \nonumber \\
& = (h_k+\mathrm{C}_k) (\gamma^2 \mathrm{C}_k-\mathrm{C}_{k+1})<0. \nonumber
\end{align}
According to the above mathematical induction, we can conclude that $g'(\gamma)<0$ for $\gamma \in (0,1)$ and $g'(1) \leq 0$. Thus, $g(\gamma)$ monotonically decreases when $\gamma \in (0,1]$, so we know that $\gamma^*>1$ and $\alpha^*=\frac{1}{1+\gamma^*}<0.5$.  

For the limiting case $B \to \infty$, from (\ref{eq:tc_infinity}), (\ref{eq:p_sr}) and (\ref{eq:p_rd}) we can easily see that $\alpha^*|_{B \to \infty} = 0.5$ and $T_c^*|_{B \to \infty} = \frac{p_0+p_1}{2d}$.

\section{Proof of Corollary~\ref{corollary:scaling_law}} \label{appendix:scaling_law}

As $n \to \infty$ we have

\begin{align}
p_0&=1-\left(1-\frac{d}{n}\right)^{\frac{n}{d}\cdot d}-d\left(1-\frac{d}{n}\right)^{\frac{n}{d}\cdot d-1} \nonumber \\
& \to  1- e^{-d}-de^{-d}, \nonumber
\end{align}
\begin{align}
p_1 &=1-\left(1-\frac{d^2}{n^2} \right)^{\frac{n}{2}} =1-e^{\frac{n}{2}\cdot \ln(1-\frac{d^2}{n^2})} \nonumber \\
& \to 1-e^0=0, \nonumber
\end{align}
\begin{align}
1-\frac{\mathrm{C}_B \cdot \beta^B}{\sum_{i=0}^B {\mathrm{C}_i \cdot \beta^i}}&=\frac{\sum_{i=0}^{B-1}{\mathrm{C}_i \cdot \beta^i}}{\sum_{i=0}^B {\mathrm{C}_i \cdot \beta^i}} \nonumber \\
& \to \frac{B}{(n-3+B)\beta}. \nonumber
\end{align}
Substituting above results into (\ref{eq:tc_LTS}), we have that as $n \to \infty$
\begin{equation}
T_c \to  \frac{\alpha (1- e^{-d}-de^{-d})}{d\beta}\frac{B}{(n-3+B)}=\Theta \left(\frac{B}{n} \right). \nonumber
\end{equation}

\end{document}